\definecolor{darkred}{rgb}{.7,0,0}
\definecolor{darkgreen}{rgb}{0,0.7,0}
\definecolor{darkblue}{rgb}{0,0,0.7}
\providecommand{\algorithmname}{Algorithm}
\newcommand{\hu}{\widehat {u}}
\newcommand{\bbE}{\mathbb{E}}
\newcommand{\bbR}{\mathbb{R}}
\newcommand{\cG}{\mathcal{G}}
\newcommand{\cO}{\mathcal{O}}
\newcommand{\testfunction}{\ensuremath{\mathcal{B}_b}}
\newtheorem{theorem}{Theorem}[section]
\newtheorem{lem}{Lemma}[section]
\newtheorem{prop}{Proposition}[section]
\newcounter{hypA}
\newenvironment{hypA}{\refstepcounter{hypA}\begin{itemize}
  \item[({\bf A\arabic{hypA}})]}{\end{itemize}}
\begin{document}

\begin{frontmatter}



\title{Multilevel Sequential Monte Carlo Samplers}


\author{Alexandros Besko$\textrm{s}^1$,  Ajay Jasr$\textrm{a}^2$, Kody La$\textrm{w}^3$, Raul Tempon$\textrm{e}^3$ \& Yan Zho$\textrm{u}^{2}$}

\address{$^{1}$Department of Statistical Science,
University College London, London, WC1E 6BT, UK,\\ $^{2}$Department of Statistics \& Applied Probability,
National University of Singapore, Singapore, 117546, SG,\\ \&  $^{3}$Center for Uncertainty Quantification
in Computational Science \& Engineering, King Abdullah University of Science and Technology, Thuwal, 23955-6900, KSA.}

\begin{abstract}
In this article we consider the approximation of expectations w.r.t.~probability distributions associated to the solution of partial differential equations (PDEs); this scenario 
appears routinely in Bayesian inverse problems. In practice, one often has to solve the associated PDE numerically, using, for instance finite element methods
and leading to a discretisation bias, with the step-size level $h_L$. In addition, the expectation cannot be computed analytically and one often resorts to Monte Carlo methods.
In the context of this problem, it is known that the introduction of the multilevel Monte Carlo (MLMC) method can reduce the amount of
computational effort to estimate expectations, for a given level of error. This is achieved via a telescoping identity associated to a Monte Carlo approximation of a sequence of probability
distributions with discretisation levels $\infty>h_0>h_1\cdots>h_L$. In many practical problems of interest, one cannot achieve an i.i.d.~sampling of the associated sequence of
probability distributions. 
A sequential Monte Carlo (SMC) version of the MLMC method is introduced to deal with this problem. 
It is shown that under appropriate assumptions, the attractive  
property of a reduction of the amount of computational effort to estimate expectations, for a given level of error, can be maintained within the SMC context. 
The approach is numerically illustrated on a Bayesian inverse problem.
\end{abstract}



 
\begin{keyword}
Multilevel Monte Carlo\sep Sequential Monte Carlo\sep Bayesian Inverse Problems.

   \noindent \textit{AMS subject classification}:
   65C30, 65Y20. 
   
   \end{keyword}

\end{frontmatter}

%
%
%
%

\section{Introduction}\label{sec:intro}

Consider a sequence of probability measures $\{\eta_{l}\}_{l\geq 0}$ on a common measurable 
space $(E,\mathcal{E})$; we assume that the probabilities have common dominating
finite-measure $du$ and write the densities w.r.t.~$du$ as $\eta_l=\eta_l(u)$. In particular, for some known $\gamma_{l}:E\rightarrow\mathbb{R}^+$, we let
\begin{equation}
\label{eq:target}
\eta_{l}(u) = \frac{\gamma_l(u)}{Z_l}
\end{equation}
where the normalizing constant $Z_l = \int_E\gamma_l(u)du$ may be unknown.  The context of interest is when the sequence of densities is
associated to an `accuracy' parameter $h_l$, with $h_l\rightarrow 0$ 
as $l\rightarrow \infty$ with $\infty>h_0>h_1>\cdots>h_{\infty}=0$. 
This set-up is relevant to the context of
discretised  numerical approximations of continuum fields, as we will explain below. 
The objective is to compute:
$$
\mathbb{E}_{\eta_\infty}[g(U)] := \int_E g(u)\eta_\infty(u)du 
$$
for potentially many measurable $\eta_\infty-$integrable functions $g:E\rightarrow\mathbb{R}$. In practice one cannot treat $h_l=0$ and
 must consider these distributions with $h_l>0$.

Problems involving numerical approximations of continuum fields are discretized before being solved numerically.  Finer-resolution solutions are more expensive to compute than coarser ones.
Such discretizations naturally give rise to hierarchies of resolutions via the use of nested meshes.
Successive solution at refined meshes can be utilized to mitigate the number of necessary solves for the 
finest resolutions.  For the solution of linear systems, the coarsened systems are solved as pre-conditioners 
within the framework of iterative linear solvers in order to reduce the condition number, and hence the 
number of necessary iterations at the finer resolution.  This is the principle of multi-grid methods.
For Monte Carlo methods, as in the context above, a telescoping sum of associated differences at successive refinement levels 
can be utilized. This is so that the bias of the resulting multilevel estimator is determined by the finest 
level but the variance of the estimators of the differences decays.  
The reduction in the variance at finer levels implies that the number of samples 
required to reach a given error tolerance is also reduced with increasing resolution.  This procedure  is then optimized to balance 
the extra per-sample cost at the finer levels.
Overall one can obtain a method with smaller computational effort to reach a pre-determined error than applying a standard Monte Carlo method immediately at the finest resolution \cite{gile:08}.

Multi Level Monte Carlo (MLMC) \cite{gile:08} (see also \cite{hein:98}) methods are such that 
one typically sets an error threshold for a target expectation, 
and then sets out to attain an estimator with the prescribed error 
utilizing an optimal allocation of Monte Carlo resources.  
Within the context of \cite{gile:08, hoan:12}, the continuum problem is a
stochastic differential equation (SDE) or PDE with random coefficients, and the target quantity is an expectation of a functional, say $g: E \rightarrow \bbR$,  
of the parameter of interest $U\in E$, over an ideal measure $U\sim \eta_{\infty}$ 
that avoids discretisation.  The levels are a hierarchy of refined approximations 
of the function-space, specified  in terms of a small resolution parameter say $h_l$, 
for $0\le l \le L$, thus giving rise to a corresponding sequence of approximate laws 
$\eta_l$.
The method uses the telescopic sum
$$
\bbE_{\eta_L}[g(U)] = \bbE_{\eta_0}[g(U)] + \sum_{l=1}^L 
\{\bbE_{\eta_l}[g(U)]-\bbE_{\eta_{l-1}}[g(U)]\}
$$
and proceeds by coupling the consecutive probability 
distributions $\eta_{l-1}$, $\eta_{l}$.  
Thus, the expectations are estimated via the standard unbiased Monte Carlo 
averages $$Y^{N_l}_l =  \sum_{i=1}^{N_l} \{ g(U_l^{(i)})-g(U_{l-1}^{(i)})\}N_l^{-1}$$
where $\{U_{l-1}^{(i)},U_l^{(i)}\}$ are i.i.d.\@ samples,
with marginal laws $\eta_{l-1}$, $\eta_l$, respectively, carefully constructed on a joint
probability space. 
This is repeated 
independently for $0\le l\le  L$.
The overall multilevel estimator will be 
\begin{equation}
\label{eq:multi}
\hat{Y}_{L,{\rm Multi}} = \sum_{l=0}^{L} Y^{N_l}_l\ ,
\end{equation}
under the convention that $g(U_{-1}^{(i)})=0$.
A simple error analysis gives that the mean squared error (MSE) is
\begin{equation}
\bbE \{ \hat{Y}_{L,{\rm Multi}} - \bbE_{\eta_{\infty}} [g(U)] \}^2 =  \underbrace{\bbE
\{ \hat{Y}_{L,{\rm Multi}}- \bbE_{\eta_{L}} [{g}(U)]\}^2}_{\rm variance}  
+ \underbrace{\{\bbE_{\eta_L} [{g}(U)] - \bbE_{\eta_{\infty}} [g(U)]\}^2}_{\rm bias}\  .
\label{eq:mse}
\end{equation}
One can now optimally allocate $N_0, N_1,\ldots, N_L$ to minimize the variance term  
 $\sum_{l=0}^L V_l/N_l$ for fixed  
computational cost $\sum_{l=0}^L C_l N_l$,
where $V_l$ is the variance of  $[g(U_l^{(i)})-g(U_{l-1}^{(i)})]$ and  $C_l$
the computational cost for its realisation. Using Lagrange multipliers for the above constrained optimisation, we get the optimal allocation of resources 
$N_l \propto \sqrt{V_l/C_l}$.
In more detail, 
 %
%
the typical chronology is that  one targets an MSE, say $\mathcal{O}(\epsilon^2)$, then
 (i) given a characterisation of the bias as an order of $h_l$,
 one 
 determines $h_l=M^{-l}$, $l=0,1,\ldots,L$, for some integer $M>1$,
 and 
  chooses a horizon $L$ such that the bias is  
$\cO(\epsilon^2)$ and
(ii) given a characterisation of $V_l$, $C_l$ as some orders of $h_l$, one
optimizes the required samples $N_0,\ldots N_L$ needed to give variance $\cO(\epsilon^2)$. 
Thus, a specification of the bias, variance and computational
costs as  functions of $h_l$ is needed. 

As a prototypical  example of the above setting \cite{gile:08}, consider the case $U=X(T)$ with $X(T)$ being the terminal position of the solution $X$ of a
 SDE
and $\eta_l$ is the distribution of $X(T)$ under the consideration of a numerical approximation with time-step $\Delta t_l = h_l$.  
The laws $\eta_{l-1}$, $\eta_l$ can be coupled via use of the same driving Brownian 
path.
Invoking the relevant error analysis 
for SDE models, one can obtain (for $U\sim \eta_{\infty}$, $U_{l}\sim \eta_l$,
and defined on the common probability space):
\begin{itemize}
\item[(i)] weak error $|\bbE [g(U_l)- g(U)]| =\cO(h_l^\alpha)$, providing the bias 
$\cO(h_l^\alpha)$, 
\item[(ii)] strong error,  $\bbE |g(U_l) - g(U)|^2 = \cO(h_l^\beta)$, giving the variance 
$V_l = \cO(h_l^\beta)$,
\item[(iii)] computational cost for a realisation of $g(U_l)-g(U_{l-1})$,  $C_l =\cO(h_l^{-\zeta})$,
\end{itemize}
for some constants $\alpha, \beta, \zeta$ related to the details of the
discretisation method.
 The standard Euler Marayuma method for solution of SDE 
gives the orders  $\alpha=\beta=\zeta=1$. 

Assuming  a general context, 
given such rates for bias, $V_l$ and $C_l$, 
one proceeds as follows.
Recall that $h_l=M^{-(l+k)}$, for some integer $M>1$. 
Then, targeting an error tolerance of $\epsilon$ and letting $h_L^{\alpha} =  M^{-L\alpha}=\mathcal{O}(\epsilon)$, one has $L=\log(\epsilon^{-1})/(\alpha \log(M)) + \cO(1)$, as in \cite{gile:08}.
Using the optimal allocation $N_l \propto \sqrt{V_l/C_l}$, one finds that  
$N_l \propto h_l^{(\beta+\zeta)/2}$.
Taking under consideration a target error of size $\mathcal{O}(\epsilon)$,
one sets $N_l \propto \epsilon^{-2} h_l^{(\beta+\zeta)/2} K_L$, 
with $K_L$  chosen to control the total error for increasing~$L$. 
Thus, for the resulted estimator in (\ref{eq:multi})-(\ref{eq:mse}), we have:
\begin{align*}
\textrm{Variance} &= \sum_{l=0}^{L} V_l N_l^{-1}=\epsilon^2 K_L^{-1} \sum_{l=0}^L h_l^{(\beta-\zeta)/2}\ ;  \\
\textrm{Comp. Cost} & =  \sum_{l=0}^L N_l C_l =  K_L^2 \epsilon^{-2}\ . 
\end{align*}
To have a variance of $\mathcal{O}(\epsilon^2)$, one sets 
$K_L =  \sum_{l=0}^L h_l^{(\beta-\zeta)/2}$, so 
$K_L$ may or may not depend on $\epsilon$ depending on whether
this sum  converges or not (recalling that $L=\mathcal{O}(|\log(\epsilon)|)$).
In the case of Euler-Marayuma, for example, $\beta=\zeta$, $K_L=L$, 
and the cost is $\cO(\log(\epsilon)^2 \epsilon^{-2})$, versus $\cO(\epsilon^{-3})$
using a single level with mesh-size $h_L=\mathcal{O}(\epsilon)$.  If $\beta>\zeta$, 
corresponding for instance to the Milstein method, then the cost is $\cO(\epsilon^{-2})$. 
The latter is the cost of obtaining the given level of error for a scalar random variable, and
is therefore optimal.  The worst scenario is when $\beta<\zeta$.  In this case 
it is sufficient to set $K_{L}=h_L^{(\beta-\zeta)/2}$ to make the variance 
$\mathcal{O}(\epsilon^2)$, and then 
the number of samples on the finest level is given by $N_L = h_L^{\beta-2\alpha}$
whereas the total algorithmic cost is
$\cO(\epsilon^{-(\zeta/\alpha + \delta)})$, where $\delta = 2-\beta/\alpha \geq 0$.
In this case, one can choose the largest value for the bias, $\alpha = \beta/2$, so that $N_L=1$ and the total cost, $\cO(\epsilon^{-\zeta/\alpha})$,  is dominated by this single
sample.  See \cite{gile:08} for more details.

It is important to note that the realizations $U_l^{(i)}$, $U_{l-1}^{(i)}$ for a given 
increment 
must be coupled to obtain decaying variances $V_l$.  
In the case of an SDE driven by Brownian motion one can simply simulate the 
driving noise on level $l$ and then upscale it to level $l-1$ by summing elements of the finer path 
\cite{gile:08}.
For the case of a PDE forward model relying on uncertain input the scenario is quite similar \cite{cliffe2011multilevel}.  
For example, in the case 
that
the input is of fixed dimension and 
the levels arise due to discretization 
of the forward map alone within a finite element context, 
one would use the same realization of the input on two separate meshes for a pairwise-coupled realization.   
Note that in the more general context of PDE, it is natural to decompose 
$\zeta=d\cdot \gamma$, where $d$ is the spatio-temporal dimension of the underlying continuum.  In particular,
the number of degrees of freedom of a $d-$dimensional field approximated on a mesh of diameter $h_l$ 
is given by $h_l^{-d}$.  Then, the forward solve associated to the evaluation of $g(U_l)$ may range from linear 
($\gamma=1$) to cubic ($\gamma=3$) in the number of degrees of freedom.  For example, the solution of an SDE 
or a sparse matrix-vector multiplication give $\gamma=1$, a dense matrix-vector multiplication would give
$\gamma=2$, and direct linear solve by Gaussian elimination would give $\gamma=3$.  
 

The present work will focus on the case of an inverse problem with fixed-dimensional input.  Indeed the difficulty arises here because we only know how to {\it evaluate} (up-to a constant) the target density at any given level, and cannot directly obtain independent samples from it.  There exist many approaches to solving such problem, for example one can review the recent works \cite{hoan:12, ketelsen2013hierarchical} which use Markov chain Monte Carlo (MCMC) methods in the multilevel framework.  In this article a more natural 
and powerful
formulation is considered, related with the use of Sequential Monte Carlo approaches. 

Sequential Monte Carlo (SMC) methods are amongst the most widely used computational techniques in statistics, engineering, physics, finance and many other disciplines. In particular SMC samplers \cite{delm:06b} are designed to approximate a sequence $\{ \eta_l \}_{l \geq 0}$ of probability distributions on a common space, whose densities are only known up-to a normalising constant. 
The method uses $N\geq 1$
samples (or particles) that are generated in parallel, and are propagated with importance sampling (often) via MCMC and resampling methods. Several convergence results, as $N$ grows, have been proved (see  e.g.~\cite{cl-2013,delm:04,delmoral1,douc}).
SMC samplers have also recently been proven to be stable in certain high-dimensional contexts \cite{beskos}. Current state of the art for the analysis of SMC algorithms include the work
of \cite{cl-2013,chopin1,delm:04,delmoral1,douc}. In this work, the method of SMC samplers is perfectly designed to approximate the sequence of distributions, but as we will see, implementing the standard telescoping
identity of MLMC requires some ingenuity. In addition, in order to consider the benefit of using SMC, one must analyze the variance of the estimate; in such scenarios this is not a trivial extension of the convergence analysis previously mentioned.
In particular, one must very precisely consider the auto-covariance of the SMC approximations and consider the rate of decrease of this quantity as the time-lag between SMC approximations increases. Such a precise analysis does not appear to
exist in the literature. We note that our work, whilst presented in the context of PDEs, is not restricted to such scenarios and, indeed can be applied in almost any other similar context (that is,  a sequence of distributions on a common space, with increasing computational costs associated to the evaluation of the densities which in some sense converge to a given density); however, the potential benefit of doing so, may not be obvious in general.

This article is structured as follows. In Section \ref{sec:set_up} the ML identity and SMC algorithm are given. In Section \ref{sec:complex} our main complexity result is given under assumptions and their implications are discussed. In Section \ref{sec:IP}
we give a context where the assumptions of our theoretical results can be verified. In Section \ref{sec:numerics} our approach is numerically demonstrated on a Bayesian inverse problem. Section \ref{sec:complex} and the Appendix provide the proofs of our main theorem.

\section{Sequential Monte Carlo Methods}\label{sec:set_up}

\subsection{Notations}

Let $(E,\mathcal{E})$ be a measurable space.
The notation $\testfunction(E)$ denotes the class of bounded and measurable real-valued functions. The 
supremum norm is written as $\|f\|_{\infty} = \sup_{u\in E}|f(u)|$ 
and $\mathcal{P}(E)$ is the set of probability measures on $(E,\mathcal{E})$. We will consider non-negative operators 
$K : E \times \mathcal{E} \rightarrow \bbR_+$ such that for each $u \in E$ the mapping $A \mapsto K(u, A)$ is a finite non-negative measure on $\mathcal{E}$ and for each $A \in \mathcal{E}$ the function $u \mapsto K(u, A)$ is measurable; the kernel $K$ is Markovian if $K(u, dv)$ is a probability measure for every $u \in E$.
For a finite measure $\mu$ on $(E,\mathcal{E})$,  and a real-valued, measurable $f:E\rightarrow\mathbb{R}$, we define the operations:
\begin{equation*}
    \mu K  : A \mapsto \int K(u, A) \, \mu(du)\ ;\quad 
    K f :  u \mapsto \int f(v) \, K(u, dv).
\end{equation*}
We also write $\mu(f) = \int f(u) \mu(du)$. In addition $\|\cdot\|_{r}$, $r\geq 1$, denotes the $L_r-$norm, where the expectation is w.r.t.~the law of the appropriate simulated algorithm.

\subsection{Algorithm}

As described in Section \ref{sec:intro}, the context of interest is when a sequence of densities 
$\{\eta_{l}\}_{l\ge 0}$, as in (\ref{eq:target}), are
associated to an `accuracy' parameter $h_l$, with $h_l\rightarrow 0$ 
as $l\rightarrow \infty$, such that $\infty>h_0>h_1\cdots>h_{\infty}=0$.  In practice one cannot treat $h_\infty=0$ and so must consider these distributions with $h_l>0$.
The laws with large $h_l$ are easy to sample from with low computational cost, but are very different from $\eta_{\infty}$, whereas, those distributions with small $h_l$ are
hard to sample with relatively high computational cost, but are closer to $\eta_{\infty}$. 
Thus, we choose a maximum level $L\ge 1$ and we will estimate
$$
\mathbb{E}_{\eta_L}[g(U)] := \int_E g(u)\eta_L(u)du\ .
$$
By the standard telescoping identity used in MLMC, one has
\begin{align}
\mathbb{E}_{\eta_L}[g(U)] & =  \mathbb{E}_{\eta_0}[g(U)] + \sum_{l=1}^{L}\Big\{
\mathbb{E}_{\eta_l}[g(U)] - \mathbb{E}_{\eta_{l-1}}[g(U)]\Big\} \nonumber \nonumber \\ 
& =\mathbb{E}_{\eta_0}[g(U)] + \sum_{l=1}^{L}\mathbb{E}_{\eta_{l-1}}\Big[
\Big(\frac{\gamma_l(U)Z_{l-1}}{\gamma_{l-1}(U)Z_l} - 1\Big)g(U)\Big]\ .
\label{eq:ml_approx}
\end{align}

Suppose now that one applies an SMC sampler \cite{delm:06b} to obtain 
a collection of samples (particles) that sequentially approximate $\eta_0, \eta_1,\ldots, \eta_L$. 
We consider the case when one initializes the population of particles by sampling  i.i.d.~from $\eta_0$, then at every step  resamples and applies a MCMC kernel to mutate the particles.
We denote by $(U_{0}^{1:N_0},\dots,U_{L-1}^{1:N_{L-1}})$, with $+\infty > N_0\geq N_1\geq \cdots  N_{L-1}\geq 1$, the samples after mutation; one resamples $U_l^{1:N_l}$ according to the weights $G_{l}(U_l^i) = 
(\gamma_{l+1}/\gamma_l)(U_l^{i})$, for indices $l\in\{0,\dots,L-1\}$.
We will denote by $\{M_l\}_{1\leq l\leq L-1}$ the sequence of MCMC kernels used at stages $1,\dots,L-1$, such that $\eta_{l}M_l = \eta_l$.
For $\varphi:E\rightarrow\mathbb{R}$, $l\in\{1,\dots,L\}$, we have the following estimator 
of $\bbE_{\eta_{l-1}}[\varphi(U)]$:
$$
\eta_{l-1}^{N_{l-1}}(\varphi) = \frac{1}{N_{l-1}}\sum_{i=1}^{N_{l-1}}\varphi(U_{l-1}^i)\ . 
$$
We define
$$
\eta_{l-1}^{N_{l-1}}(G_{l-1}M_l(du_l)) = \frac{1}{N_{l-1}}\sum_{i=1}^{N_{l-1}}G_{l-1}(U_{l-1}^i) M_l(U_{l-1}^i,du_l)\ .
$$
The joint probability distribution for the SMC algorithm is 
$$
\prod_{i=1}^{N_0} \eta_0(du_0^i) \prod_{l=1}^{L-1} \prod_{i=1}^{N_l} \frac{\eta_{l-1}^{N_{l-1}}(G_{l-1}M_l(du_l^i))}{\eta_{l-1}^{N_{l-1}}(G_{l-1})}\ .
$$
If one considers one more step in the above procedure, that would deliver samples 
$\{U_L^i\}_{i=1}^{N_L}$, a standard SMC sampler estimate of the quantity of interest in (\ref{eq:ml_approx})
is $\eta_L^{N}(g)$; the earlier samples are discarded. 
Within a multilevel context, a consistent SMC estimate of \eqref{eq:ml_approx}
is
\begin{equation}
\widehat{Y} =
\eta_{0}^{N_0}(g) + \sum_{l=1}^{L}\Big\{\frac{\eta_{l-1}^{N_{l-1}}(gG_{l-1})}{\eta_{l-1}^{N_{l-1}}(G_{l-1})} - \eta_{l-1}^{N_{l-1}}(g)\Big\}\label{eq:smc_est}\ , 
\end{equation}
and this will be proven to be superior than the standard one, under assumptions.

There are two important structural differences within the MLSMC context, 
compared to the standard ML implementation of \cite{gile:08}, sketched in 
Section \ref{sec:intro}:
\begin{itemize}
\item[i)] the   $L+1$ terms  in (\ref{eq:smc_est}) are \emph{not} unbiased estimates 
of the differences $\mathbb{E}_{\eta_l}[g(U)] - \mathbb{E}_{\eta_{l-1}}[g(U)]$,
so the relevant MSE error decomposition here is:
\begin{equation}
\label{eq:dec}
\mathbb{E}\big[ \{\widehat{Y}-\mathbb{E}_{\eta_\infty}[g(U)] \}^2\big]
\le 2\,\mathbb{E}\big[\{\widehat{Y}-\mathbb{E}_{\eta_L}[g(U)]\}^2\big] + 
2\,\{ \mathbb{E}_{\eta_L}[g(U)] - \mathbb{E}_{\eta_\infty}[g(U)]  \}^2\ . 
\end{equation} 
\item[ii)] the same $L+1$ estimates are \emph{not} independent. Hence a
substantially more complex error analysis will be required to characterise  
$\mathbb{E}[\{\widehat{Y}-\mathbb{E}_{\eta_L}[g(U)]\}^2]$.
In Section \ref{sec:complex}, we will obtain an expression for this discrepancy, 
which will be more involved than the standard $\sum_{l=0}^{L}V_l/N_l$, 
but will still allow for a relevant constrained optimisation 
to determine the optimal allocation of particle sizes $N_l$ along the levels.
\end{itemize}
Given an appropriate classification of both terms on the R.H.S.\@ 
of (\ref{eq:dec}) as an order of the tolerance 
for a Bayesian Inverse Problem (to be described in Section \ref{sec:IP}), one can specify a level $L$, and optimal
Monte-Carlo sample sizes $N_l$ so that the MSE of $\widehat{Y}$ 
is $\mathcal{O}(\epsilon^2)$ at a reduced computational cost.

\section{Development of multilevel SMC}\label{sec:complex}

\subsection{Main Result}

We will now obtain an analytical result that controls the error term 
$\mathbb{E}[\{\widehat{Y}-\mathbb{E}_{\eta_L}[g(U)]\}^2]$ in expression (\ref{eq:dec}).
This is of general significance for the development of MLSMC in various contexts.
Then, we will look in detail at an inverse problem context (developed in Section \ref{sec:IP})
and fully investigate the MLSMC method.

For any $l\in\{0,\dots,L\}$ and $\varphi\in \mathcal{B}_b(E)$ we write:
$
\eta_l(\varphi) := \int_E \varphi(u)\eta_l(u)du.
$
We introduce the following assumptions, which will be verifiable in some contexts. They are rather strong, but could be relaxed at condsiderable increase in the complexity of the arguments,
which will ultimately provide the same information. In addition, the assumptions are standard in the literature of SMC methods; see \cite{delm:04,delmoral1}.

\begin{hypA}
\label{hyp:A}
There exist $0<\underline{C}<\overline{C}<+\infty$ such that
\begin{eqnarray*}
\sup_{l \geq 1} 
\sup_{u\in E} G_l (u) & \leq & \overline{C}\ ;\\
\inf_{l \geq 1} 
\inf_{u\in E} G_l (u) & \geq & \underline{C}\ .
\end{eqnarray*}
\end{hypA}

\begin{hypA}
\label{hyp:B}
There exists a $\rho\in(0,1)$ such that for any $l\ge 1$, $(u,v)\in E^2$, $A\in\mathcal{E}$:
$$
\int_A M_l(u,du') \geq \rho \int_A M_l(v,dv')\ .
$$
\end{hypA}

\begin{theorem}\label{theo:main_error}
Assume (A\ref{hyp:A}-\ref{hyp:B}). There exist $C<+\infty$ and  $\kappa\in (0,1)$ such  that for any $g\in\mathcal{B}_b(E)$, with $\|g\|_{\infty}=1$,
\begin{align*}
\mathbb{E}\big[\{\widehat{Y}-\mathbb{E}_{\eta_L}[g(U)]\}^2\big] 
\leq 
C\,\bigg(\frac{1}{N_0} + &\sum_{l=1}^{L}\frac{\|\tfrac{Z_{l-1}}{Z_{l}}G_{l-1}-1\|_{\infty}^2}{N_{l-1}} \\ &+ 
\sum_{1\le l<q\le L}\|\tfrac{Z_{l-1}}{Z_{l}}G_{l-1}-1\|_{\infty} \|\tfrac{Z_{q-1}}{Z_{q}}G_{q-1}-1\|_{\infty}
\big\{\tfrac{\kappa^{q-l}}{N_{l-1}}
+\tfrac{1}{N_{l-1}^{1/2}N_{q-1}} 
\big\}\bigg)\ .
\end{align*}

\end{theorem}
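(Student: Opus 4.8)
The plan is to expand the mean-square error over the $L+1$ summands of $\widehat{Y}-\mathbb{E}_{\eta_L}[g(U)]$ and control the diagonal and off-diagonal contributions separately. Write $\delta_l:=\eta_{l-1}^{N_{l-1}}-\eta_{l-1}$ for $1\le l\le L$ and $\delta_0:=\eta_0^{N_0}-\eta_0$, and set $T_0:=\delta_0(g)$ and, for $l\ge 1$, $T_l:=\big(\tfrac{\eta_{l-1}^{N_{l-1}}(gG_{l-1})}{\eta_{l-1}^{N_{l-1}}(G_{l-1})}-\eta_{l-1}^{N_{l-1}}(g)\big)-\big(\eta_l(g)-\eta_{l-1}(g)\big)$, so that by \eqref{eq:ml_approx} and \eqref{eq:smc_est} one has the exact identity $\widehat{Y}-\mathbb{E}_{\eta_L}[g(U)]=\sum_{l=0}^{L}T_l$. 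The first step I would carry out is a purely algebraic rewriting of $T_l$ that makes the factor $\|\tfrac{Z_{l-1}}{Z_l}G_{l-1}-1\|_\infty$ explicit: putting $\overline{G}_{l-1}:=\tfrac{Z_{l-1}}{Z_l}G_{l-1}$ and $H_{l-1}:=\overline{G}_{l-1}-1$ (so $\eta_{l-1}(\overline{G}_{l-1})=1$, $\eta_{l-1}(H_{l-1})=0$, $\|H_{l-1}\|_\infty=\|\tfrac{Z_{l-1}}{Z_l}G_{l-1}-1\|_\infty$, and, by (A\ref{hyp:A}), $\overline{G}_{l-1}\in[\underline{C}/\overline{C},\overline{C}/\underline{C}]$ with $\sup_l\|H_{l-1}\|_\infty<\infty$), a short computation (add and subtract $\eta_l(g)$, use $\eta_l(g)-\eta_{l-1}(g)=\eta_{l-1}((g-\eta_l(g))H_{l-1})$ and $\tfrac{1}{\mu(\overline{G}_{l-1})}-1=-\tfrac{\delta_l(H_{l-1})}{\mu(\overline{G}_{l-1})}$) gives
$$T_l=\frac{1}{\eta_{l-1}^{N_{l-1}}(\overline{G}_{l-1})}\Big(\delta_l(\psi_l)-\delta_l(g)\,\delta_l(H_{l-1})\Big),\qquad \psi_l:=(g-\eta_l(g))H_{l-1},\quad \|\psi_l\|_\infty\le 2\|H_{l-1}\|_\infty ,$$
and, expanding the prefactor as $1-\delta_l(H_{l-1})/\eta_{l-1}^{N_{l-1}}(\overline{G}_{l-1})$, $T_l=\delta_l(\psi_l)+\mathcal R_l$ where $\mathcal R_l$ is a finite sum of products of at least two factors $\delta_l(\cdot)$ times quantities bounded via (A\ref{hyp:A}).

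For the diagonal terms I would invoke the classical time-uniform $L_p$ estimates for SMC samplers under (A\ref{hyp:A})--(A\ref{hyp:B}) (see \cite{delm:04,delmoral1}): $\|\delta_l(f)\|_p\le C_p\|f\|_\infty/\sqrt{N_{l-1}}$ for bounded $f$, with $C_p$ independent of $l$ (and $\|\delta_0(f)\|_p\le C_p\|f\|_\infty/\sqrt{N_0}$ by i.i.d.\ sampling). Combined with Minkowski and generalised H\"older and with $\sup_l\|H_{l-1}\|_\infty<\infty$, this gives $\|\delta_l(\psi_l)\|_2\le C\|H_{l-1}\|_\infty/\sqrt{N_{l-1}}$ and $\|\mathcal R_l\|_2\le C\|H_{l-1}\|_\infty/N_{l-1}$ (each product defining $\mathcal R_l$ carries a $\delta_l(g)$ or $\delta_l(\psi_l)$ factor), so $\mathbb{E}[T_l^2]\le C\|H_{l-1}\|_\infty^2/N_{l-1}$ and $\mathbb{E}[T_0^2]\le C/N_0$, producing the terms $\tfrac1{N_0}$ and $\sum_{l=1}^L\|H_{l-1}\|_\infty^2/N_{l-1}$.

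The heart of the proof is the off-diagonal terms $\mathbb{E}[T_lT_q]$ with $l<q$. Expanding $T_lT_q=(\delta_l(\psi_l)+\mathcal R_l)(\delta_q(\psi_q)+\mathcal R_q)$, every product containing a remainder is bounded by Cauchy--Schwarz with the $L_2$ bounds above, giving at most $C\|H_{l-1}\|_\infty\|H_{q-1}\|_\infty/(N_{l-1}^{1/2}N_{q-1})$ once one uses $N_{l-1}\ge N_{q-1}$. The new estimate is for the leading product $\mathbb{E}[\delta_l(\psi_l)\delta_q(\psi_q)]$, which I would attack by conditioning on the $\sigma$-field $\mathcal F_{l-1}$ of the particle system up to generation $l-1$: since $\delta_l(\psi_l)$ is $\mathcal F_{l-1}$-measurable, $\mathbb{E}[\delta_l(\psi_l)\delta_q(\psi_q)]=\mathbb{E}\big[\delta_l(\psi_l)\big(\mathbb{E}[\eta_{q-1}^{N_{q-1}}(\psi_q)\mid\mathcal F_{l-1}]-\eta_{q-1}(\psi_q)\big)\big]$. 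Given $\mathcal F_{l-1}$, generations $l-1,\dots,q-1$ form an SMC sampler started from $\eta_{l-1}^{N_{l-1}}$, so I would combine: (i) the SMC bias expansion $\mathbb{E}[\eta_{q-1}^{N_{q-1}}(\psi_q)\mid\mathcal F_{l-1}]=\Psi_{l-1,q-1}(\eta_{l-1}^{N_{l-1}})(\psi_q)+r_q$ with $|r_q|\le C\|\psi_q\|_\infty/N_{q-1}$ (the $N_p$ being non-increasing), where $\Psi_{m,n}$ is the composition of the one-step updates $\mu\mapsto\mu(G_{p-1}M_p(\cdot,\cdot))/\mu(G_{p-1})$ and $\Psi_{m,n}(\eta_m)=\eta_n$; and (ii) the exponential stability of this normalised Feynman--Kac semigroup under (A\ref{hyp:A})--(A\ref{hyp:B}), in the quantitative $L_2$ form $\big\|\Psi_{l-1,q-1}(\eta_{l-1}^{N_{l-1}})(\psi_q)-\eta_{q-1}(\psi_q)\big\|_2\le C\kappa^{q-l}\|\psi_q\|_\infty/\sqrt{N_{l-1}}+C\|\psi_q\|_\infty/N_{l-1}$, with $\kappa\in(0,1)$ determined by $\underline{C},\overline{C},\rho$. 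Inserting (i)--(ii) and applying Cauchy--Schwarz yields $|\mathbb{E}[\delta_l(\psi_l)\delta_q(\psi_q)]|\le C\|H_{l-1}\|_\infty\|H_{q-1}\|_\infty\big(\kappa^{q-l}/N_{l-1}+1/(N_{l-1}^{1/2}N_{q-1})\big)$, exactly the claimed off-diagonal bound; the cross terms involving the initial level $l=0$ are treated by the same conditioning (now on the i.i.d.\ initial sample) and, using $\sup_l\|H_{l-1}\|_\infty<\infty$ and $\sum_{j\ge1}\kappa^j<\infty$, are dominated by the displayed bound. Summing over $0\le l<q\le L$ completes the estimate.

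The step I expect to be the main obstacle is (ii): showing that the correlation between SMC estimates separated by $q-l$ generations decays like $\kappa^{q-l}/\sqrt{N}$ \emph{simultaneously} in the lag and the sample size. The crude uniform-forgetting bound is $O(\kappa^{q-l})$ but does not decay in $N$, while the crude $L_2$ bound is $O(1/\sqrt N)$ but does not decay in the lag; one needs both at once, which forces a careful level-by-level propagation of the generation-$(l-1)$ sampling error through the \emph{normalised} (hence non-expanding) Feynman--Kac kernels, tracking the Dobrushin contraction precisely enough that the potential bounds $\underline{C},\overline{C}$ do not accumulate geometrically in $q-l$. Such a precise auto-covariance estimate for SMC does not seem to be available in the literature in the form required, so I would isolate it as a standalone lemma (proved in an appendix); everything else in the argument is the bookkeeping described above.
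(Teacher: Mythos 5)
Your argument is correct and rests on the same two pillars as the paper's proof (time-uniform $L_r$ bounds for $\eta_{l-1}^{N_{l-1}}-\eta_{l-1}$ and exponential stability of the normalised Feynman--Kac semigroup under (A1)--(A2)), but it is packaged differently. You linearise each increment exactly, $T_l=\delta_l(\psi_l)+\mathcal{R}_l$ with a quadratic remainder, and treat the leading cross term by conditioning on $\mathcal{F}_{l-1}$, invoking a conditional-bias lemma of order $1/N_{q-1}$ together with the perturbation bound for $\Psi_{l-1,q-1}$; the paper instead keeps the bounded random factor $A_{l-1}(g,N_{l-1})$ in its identity (equation \eqref{eq:basic}), substitutes the global decomposition \eqref{eq:delmoral_decomp} of $\eta_{n}^{N_n}-\eta_n$ into martingale terms $V_p^{N_p}(D_{p,n}(\cdot))/\sqrt{N_p}$ and remainders $R_{p+1}^{N_p}(D_{p,n}(\cdot))$ into \emph{both} factors of each cross product, and bounds all resulting terms via Lemmas \ref{lem:tech_lem_imp}--\ref{lem:tech_lem_res2} and Proposition \ref{prop:prop_corr_bd3}, using that the $V_p$ with $p>q-1$ have zero conditional mean. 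The mechanism producing $\kappa^{q-l}/N_{l-1}+1/(N_{l-1}^{1/2}N_{q-1})$ is therefore the same in both routes (post-level-$(l-1)$ sampling noise integrates out; the level-$(l-1)$ error is propagated through the contracting operators $D_{p,n}$; $O(1/N_p)$ bias terms survive), and your bookkeeping, including the use of $N_{l-1}\ge N_{q-1}$ to absorb terms like $1/(N_{l-1}N_{q-1}^{1/2})$, matches the paper's. Two remarks: the step you flag as the main obstacle, your estimate (ii), is in fact nearly immediate, since the semigroup perturbation is an \emph{exact} identity, $\Psi_{l-1,q-1}(\mu)(\varphi)-\eta_{q-1}(\varphi)=\tfrac{\eta_{l-1}(Q_{l-1,q-1}(1))}{\mu(Q_{l-1,q-1}(1))}\,(\mu-\eta_{l-1})(D_{l-1,q-1}(\varphi))$, so the sup-norm contraction $\|D_{l-1,q-1}(\varphi)\|_\infty\le C\kappa^{q-l}\|\varphi\|_\infty$ (the paper's Lemma \ref{lem:tech_lem_imp}(i)) plus the $L_2$ bound on $(\eta_{l-1}^{N_{l-1}}-\eta_{l-1})$ gives it at once; the genuinely technical standalone lemma in your route is rather the conditional bias (i) with varying, non-increasing particle numbers started from the random empirical measure, which is exactly what the paper's $R_{p+1}^{N_p}$ terms encode. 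What your route buys is a cleaner conceptual statement (an explicit auto-covariance/forgetting lemma for the SMC flow) at the cost of needing fourth moments for $\mathcal{R}_l$ and the conditional restart argument; what the paper's route buys is that a single decomposition handles diagonal, cross and remainder terms uniformly without conditioning on intermediate $\sigma$-fields beyond the zero-mean property of the $V_p$.
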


\subsection{Proof of Theorem \ref{theo:main_error}}\label{sec:main_proof}

The following notations are adopted; this will substantially simplify subsequent expressions:
\begin{align}
Y_{l-1}^{N_{l-1}} &= \frac{\eta_{l-1}^{N_{l-1}}(gG_{l-1})}{\eta_{l-1}^{N_{l-1}}(G_{l-1})} - \eta_{l-1}^{N_{l-1}}(g)\ , \quad  \nonumber \\[0.2cm] 
Y_{l-1} &= \frac{\eta_{l-1}(gG_{l-1})}{\eta_{l-1}(G_{l-1})} - \eta_{l-1}(g)
\,\,\,\,\big(\,  \equiv \eta_{l}(g) - \eta_{l-1}(g)\, \big)\ , \label{eq:analytical} \\[0.3cm]
\nonumber
\overline{\varphi_l}(u) & = \big(\tfrac{Z_{l-1}}{Z_l}G_{l-1}(u)-1\big) \ , \\[0.3cm]
\nonumber
\widetilde{\varphi_l}(u)
&=  g(u) \overline{\varphi_l}(u) 
\ , \\[0.3cm]
\label{eq:ay}
A_n(\varphi,N) &  = \eta_n^N(\varphi G_n)/\eta_n^N(G_n) \ , \quad \varphi\in\mathcal{B}_b(E)\ , \quad 
0\leq n\leq L-1  \ , \\[0.2cm]
\label{eq:aybar}
\overline{A}_n(\varphi,N) &  =  A_n(\varphi,N) - \frac{\eta_n(\varphi G_n)}{\eta_n(G_n)}\ .
\end{align}
Throughout this Section, $C$ is a constant whose value may change, but does not depend on any time parameters of the Feynman-Kac formula, nor $N_l$. The proof of Theorem \ref{theo:main_error} follows from several technical lemmas which are now given and supported by further results in the Appendix; the proof of the theorem is at the end of this subsection.
It is useful to observe that $Z_l/Z_{l-1} = \eta_{l-1}(G_{l-1})$, $\eta_{l-1}(\overline\varphi_l) =0$
and 
$|A_n(\varphi,N)|\le |\varphi|_{\infty}$ with probability 1 
as the conditional $L_1$-norm of functional $\varphi$
over a discrete distribution.
We will make repeated use of the following identity which follows from these observations upon adding and subtracting
$ \eta_{l-1}^{N_{l-1}} (\frac{Z_{l-1}}{Z_l}g(\cdot)G_{l-1}(\cdot) )$:
\begin{equation}
\label{eq:basic}
Y_{l-1}^{N_{l-1}}-Y_{l-1} = A_{l-1}(g,N_{l-1})\,\{\eta_{l-1} - \eta_{l-1}^{N_{l-1}}\} (\overline{\varphi_l})  + \{\eta_{l-1}^{N_{l-1}}-\eta_{l-1}\} (\widetilde{\varphi_l})\ .
\end{equation}
\begin{lem}\label{lem:tech_lem}
Assume (A\ref{hyp:A}-\ref{hyp:B}). There exists a $C<+\infty$ such that 
for any $l\ge 1$:
$$
\|Y_{l-1}^{N_{l-1}}- Y_{l-1} \|_2^2 \leq \frac{C\,\|\frac{Z_{l-1}}{Z_{l}}G_{l-1}-1\|_{\infty}^2}{N_{l-1}}\ .
$$
\end{lem}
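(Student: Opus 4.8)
\textbf{Proof proposal for Lemma \ref{lem:tech_lem}.}

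The plan is to start from the identity \eqref{eq:basic}, which decomposes $Y_{l-1}^{N_{l-1}}-Y_{l-1}$ into two terms: $A_{l-1}(g,N_{l-1})\,\{\eta_{l-1}-\eta_{l-1}^{N_{l-1}}\}(\overline{\varphi_l})$ and $\{\eta_{l-1}^{N_{l-1}}-\eta_{l-1}\}(\widetilde{\varphi_l})$. Since the $L_2$-norm is subadditive (or by the crude bound $(a+b)^2 \le 2a^2 + 2b^2$), it suffices to bound the $L_2$-norm of each term separately. For the first term I would use the deterministic bound $|A_{l-1}(g,N_{l-1})| \le \|g\|_\infty = 1$ noted just before the lemma, so that $\|A_{l-1}(g,N_{l-1})\{\eta_{l-1}-\eta_{l-1}^{N_{l-1}}\}(\overline{\varphi_l})\|_2 \le \|\{\eta_{l-1}^{N_{l-1}}-\eta_{l-1}\}(\overline{\varphi_l})\|_2$. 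Then both remaining quantities are $L_2$-errors of the SMC estimator $\eta_{l-1}^{N_{l-1}}$ applied to a fixed bounded test function, namely $\overline{\varphi_l}$ and $\widetilde{\varphi_l} = g\,\overline{\varphi_l}$.

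The key input is a standard SMC $L_r$-error bound: under (A\ref{hyp:A}--\ref{hyp:B}) there is a universal constant $C$ such that for any bounded $\varphi$ and any stage $n$,
\begin{equation*}
\big\|\{\eta_n^{N_n}-\eta_n\}(\varphi)\big\|_2 \le \frac{C\,\|\varphi\|_\infty}{N_n^{1/2}}\ ,
\end{equation*}
uniformly in $n$; this is precisely the type of result quoted from \cite{delm:04,delmoral1} and it is the place where the strong mixing assumptions (A\ref{hyp:A}--\ref{hyp:B}) are used (they ensure the Feynman--Kac semigroup forgets its initial condition geometrically, giving constants independent of the time horizon). I would either cite this directly or, if a self-contained version is wanted, invoke the corresponding lemma in the Appendix. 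Applying it with $\varphi = \overline{\varphi_l}$ and with $\varphi = \widetilde{\varphi_l}$, and noting $\|\widetilde{\varphi_l}\|_\infty \le \|g\|_\infty \|\overline{\varphi_l}\|_\infty = \|\overline{\varphi_l}\|_\infty$, both pieces are bounded by $C\|\overline{\varphi_l}\|_\infty / N_{l-1}^{1/2}$. Since $\|\overline{\varphi_l}\|_\infty = \|\tfrac{Z_{l-1}}{Z_l}G_{l-1}-1\|_\infty$ by definition \eqref{eq:analytical}, squaring gives the claimed bound $C\|\tfrac{Z_{l-1}}{Z_l}G_{l-1}-1\|_\infty^2 / N_{l-1}$.

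The only subtlety — and the part I expect to require the most care — is making sure the SMC $L_2$-bound is applied with a constant $C$ that is genuinely uniform in $l$ (equivalently, in the time horizon of the underlying Feynman--Kac flow), since the whole point of the multilevel analysis is that $L$ grows. This is exactly where (A\ref{hyp:A}) (uniform upper and lower bounds on the potentials $G_l$) and (A\ref{hyp:B}) (uniform minorisation of the MCMC kernels $M_l$) are essential; they are the hypotheses of the time-uniform SMC estimates in \cite{delm:04,delmoral1}, so the uniformity is available off the shelf. A secondary minor point is that $\overline{\varphi_l}$ is not centred the same way as $\widetilde{\varphi_l}$ under $\eta_{l-1}^{N_{l-1}}$, but since I am only using the raw $L_2$-error bound (not exploiting $\eta_{l-1}(\overline{\varphi_l})=0$ yet — that cancellation is what drives the cross-term decay in the main theorem, not this lemma), no extra work is needed here.
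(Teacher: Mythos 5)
Your argument is correct and follows essentially the same route as the paper: both start from the decomposition \eqref{eq:basic}, apply the $C_2$-inequality together with the deterministic bound $|A_{l-1}(g,N_{l-1})|\le \|g\|_{\infty}=1$, and then invoke the time-uniform SMC $L_2$-error bound (Theorem 7.4.4 of \cite{delm:04}, valid uniformly in the level under (A\ref{hyp:A}--\ref{hyp:B})) for the test functions $\overline{\varphi_l}$ and $\widetilde{\varphi_l}$, using $\|\widetilde{\varphi_l}\|_{\infty}\le\|\overline{\varphi_l}\|_{\infty}$. Your remarks on the uniformity of the constant and on not needing $\eta_{l-1}(\overline{\varphi_l})=0$ at this stage are consistent with how the paper proceeds.
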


\begin{proof}
From (\ref{eq:basic}) and the $C_2$-inequality we obtain:
\begin{align*}
\|Y_{l-1}^{N_{l-1}}- Y_{l-1} \|_2^2 \le   
2\,\|A_{l-1}(g,N_{l-1})\{\eta_{l-1}^{N_{l-1}}-\eta_{l-1}\} (\overline{\varphi_l})\|^2_2 + 
2\,\|\{\eta_{l-1}^{N_{l-1}}-\eta_{l-1}\} (\widetilde{\varphi_l})\|^2_2
\\ 
\leq 2\,\|\{\eta_{l-1}^{N_{l-1}}-\eta_{l-1}\} (\overline{\varphi_l})\|_2^2
+ 2\,\|\{\eta_{l-1}^{N_{l-1}}-\eta_{l-1}\} (\widetilde{\varphi_l})\|^2_2
\end{align*}
By \cite[Theorem 7.4.4]{delm:04} we have that both $L_2$-norms are upper bounded by
$\frac{C\|\frac{Z_{l-1}}{Z_{l}}G_{l-1}-1\|_{\infty}^2}{2N_{l-1}}$.
This  completes the proof.
\end{proof}

By the $C_2$-inequality and standard properties of i.i.d.~random variables one has: 
\begin{align*}
\mathbb{E}\big[\{\widehat{Y}-\mathbb{E}_{\eta_L}[g(U)]\}^2\big]
 = \mathbb{E}\Big[\big\{\sum_{l=1}^{N}(Y_{l-1}^{N_{l-1}} - Y_{l-1})\big\}^2\Big]
 \le \frac{C}{N_0} + 2\,\mathbb{E}\Big[\big\{\sum_{l=2}^{N}(Y_{l-1}^{N_{l-1}} - Y_{l-1})\big\}^2\Big]\ .
\end{align*}
We have that: 
\begin{equation*}
\mathbb{E}\Big[\big\{\sum_{l=2}^{N}(Y_{l-1}^{N_{l-1}} - Y_{l-1})\big\}^2\Big]
= \mathbb{E}\Big[ \sum_{l=2}^{N}(Y_{l-1}^{N_{l-1}} - Y_{l-1})^2\Big] + 2\sum_{2\le l < q\le L}  \mathbb{E}\big[(Y_{l-1}^{N_{l-1}} - Y_{l-1}) (Y_{q-1}^{N_{q-1}} - Y_{q-1})\big]
\end{equation*}
Lemma \ref{lem:tech_lem} gives that: 
\begin{equation*}
 \mathbb{E}\Big[ \sum_{l=2}^{N}(Y_{l-1}^{N_{l-1}} - Y_{l-1})^2\Big]\le 
C\sum_{l=2}^{L}\frac{\|\frac{Z_{l-1}}{Z_{l}}G_{l-1}-1\|_{\infty}^2}{N_{l-1}}
\end{equation*}
thus it remains to treat the cross-interaction terms.
Using the decomposition in (\ref{eq:basic}), we obtain
\begin{align*}
\sum_{2\le l < q\le L} \mathbb{E}&\big[(Y_{l-1}^{N_{l-1}} - Y_{l-1}) (Y_{q-1}^{N_{q-1}} - Y_{q-1})\big] =   \\
&=\sum_{2\le l < q\le L}  \mathbb{E}\,\big[A_{l-1}(g,N)A_{q-1}(g,N)\{\eta_{l-1}^{N_{l-1}}-\eta_{l-1}\}(\overline{\varphi_l})\{\eta_{q-1}^{N_{q-1}}-\eta_{q-1}\}(\overline{\varphi_q})\,\big]\\
&\hspace{1.5cm}+\sum_{2\le l < q\le L}
\mathbb{E}\,\big[\,A_{l-1}(g,N)\{\eta_{l-1}^{N_{l-1}}-\eta_{l-1}\}(\overline{\varphi_l})\{\eta_{q-1}^{N_{q-1}}-\eta_{q-1}\}(\widetilde{\varphi_q})\,\big]\\
&\hspace{1.5cm}+\sum_{2\le l < q\le L}
\mathbb{E}\,\big[\,A_{q-1}(g,N)\{\eta_{l-1}^{N_{l-1}}-\eta_{l-1}\}(\widetilde{\varphi_l})\{\eta_{q-1}^{N_{q-1}}-\eta_{q-1}\}(\overline{\varphi_q})\,\big]\\
&\hspace{1.5cm}+\sum_{2\le l < q\le L}
\mathbb{E}\,\big[\,\{\eta_{l-1}^{N_{l-1}}-\eta_{l-1}\}(\widetilde{\varphi_l})\{\eta_{q-1}^{N_{q-1}}-\eta_{q-1}\}(\widetilde{\varphi_q})\,\big]\ .
\end{align*}
We will now apply Proposition \ref{prop:prop_corr_bd3} to the relevant terms in the sum, to yield the upper-bound:
\begin{align*}
C \sum_{1\le l < q\le L}\|\widetilde{\varphi_l}\|_{\infty}\|\widetilde{\varphi_q}\|_{\infty}\Big\{\frac{\kappa^{q-l}}{N_{l-1}}
&+\frac{1}{N_{l-1}^{1/2}N_{q-1}}
\Big\}\ .
\end{align*}
From here one can conclude the proof of Theorem \ref{theo:main_error}.

\subsection{MLSMC Variance Analysis}
\label{ssec:multilevel_component}

This section considers the specification of parameters for the MLSMC algorithm after consideration of Theorem \ref{theo:main_error}.  Recall that in the simpler SDE setting of \cite{gile:08} 
one must work with the strong error estimate $\bbE |g(U_l) - g(U)|^2 = \cO(h_l^\beta)$
and the deduced variance $V_l=\mathrm{Var}[g(U_l)-g(U_{l-1})] = \cO(h_l^{\beta})$.
From Theorem \ref{theo:main_error}, a similar role within MLSMC is taken 
by: 
\begin{equation}
\label{eq:VL}
V_l:= \|\tfrac{Z_{l-1}}{Z_{l}}G_{l-1}-1\|_{\infty}^2 \ .
\end{equation}
We assume that in the given context one can obtain 
that $V_l = \mathcal{O}(h_l^{\beta})$ for some appropriate
rate constant $\beta\ge 1$. 
Recall that we have $h_l=M^{-l}$, for some integer $M>1$
and we assume a bias of $\mathcal{O}(h_L^{\alpha})$.
Thus, targeting an error tolerance of $\epsilon$, we have
 $h_L^{\alpha} = M^{-L}=\mathcal{O}(\epsilon)$, 
so that  $L=\log(\epsilon^{-1})/(\alpha \log(M)) + \cO(1)$.
Now, to   optimally allocate $N_0, N_1, \ldots, N_L$,
one  proceeds along the lines outlined in the Introduction
under consideration of Theorem \ref{theo:main_error}. 
Notice that $\sum_{q=l+1}^L \kappa^{q-l} \leq \frac{1}{1-\kappa}$ and 
$V_q$ is smaller  than $V_l$ (in terms of the obtained upper bounds), so the upper bound in Theorem \ref{theo:main_error} can be bounded by:
\begin{equation}
\label{eq:up}
\frac{1}{N_0} + \sum_{l=1}^L \bigg(\frac{h_l^\beta}{N_l} + 
\Big(\frac{h_l^\beta}{N_l} \Big)^{1/2} \sum_{q=l+1}^L \frac{h_q^{\beta/2}}{N_q} \bigg)\ .
\end{equation}
We also assume a computational cost proportional to $\sum_{l=0}^L N_l h_l^{-\zeta}$, for some rate $\zeta\ge 1$, 
with the resampling cost considered to 
to be negligible for practical purposes compared to the cost of the  calculating the importance weights (as it is the case for the inverse problems we focus upon later).
As with standard MLMC in \cite{gile:08}, we need to find $N_{0},\ldots, N_L$ 
that optimize (\ref{eq:up}) given a fixed computational cost $\sum_{l=0}^L N_l h_l^{-\zeta}$.
Such a constrained optimization with the complicated error bound in (\ref{eq:up}), results
in the need to solve
a quartic equation
as a function of $V_l$ and $C_l$.  Instead, one can {\it assume} that the second term
on the R.H.S.\@ of (\ref{eq:up}) 
 is negligible, solve the constrained optimization ignoring that term, 
and then check that the effect of that term for the given choice of $\{N_l\}_{l=0}^{L-1}$ is smaller than $\mathcal{O}(\epsilon^2)$. 
Following this approach gives a constrained optimisation problem 
identical to the simple case of \cite{gile:08}, with solution
$N_l \propto \sqrt{V_l/C_l} = \mathcal{O}(h_l^{(\beta+\zeta)/2})$.
One works as in Section \ref{sec:intro}, and 
selects:
\begin{equation*}
N_l \propto \epsilon^{-2}h_l^{(\beta+\zeta)/2}K_L \ ; \quad 
K_L  \eqsim \sum_{l=0}^{L} h_l^{(\beta-\zeta)/2}\ . 
\end{equation*}
Then returning to (\ref{eq:up}) one can check that indeed the extra summand 
is smaller than $\mathcal{O}(\epsilon^2)$ for the above choice 
of $N_l$. Notice that: (i)\, $h_q^{\beta/2}/N_q =
\mathcal{O}(\epsilon^{2}h_l^{-\zeta/2}/K_L)$, 
and the sum $\sum_{q=l+1}^{L}h_l^{-\zeta/2}$ is dominated 
by $h_L^{-\zeta/2} = \mathcal{O}(\epsilon^{-\zeta/(2\alpha)})$;
\,(ii)\,we have $(h_l^{\beta}/N_l)^{1/2}   \propto \epsilon/K_L^{1/2}h_l^{(\beta-\zeta)/4} $. 
Therefore, %
\begin{align*}
\sum_{l=1}^L \bigg(
\Big(\frac{h_l^\beta}{N_l} \Big)^{1/2} \sum_{q=l+1}^L \frac{h_q^{\beta/2}}{N_q} \bigg) = \mathcal{O}\Big(\epsilon^{2}
\epsilon^{1-\zeta/(2\alpha)}\sum_{l=0}^{L}h_l^{(\beta-\zeta)/4}
/K_L^{3/2}\Big) = \mathcal{O}(\epsilon^{2}\epsilon^{1-\zeta/(2\alpha)})\ . 
\end{align*}
Thus, when $\zeta\le 2\alpha$, the overall mean squared error 
is still $\mathcal{O}(\epsilon^2)$.
In the inverse problem context of Section \ref{sec:IP},
we will establish that $\beta=2$, $\alpha=\beta/2$.
Also, in many cases (depending on the chosen PDE solver) 
we have $\zeta=d$.

\section{Bayesian Inverse Problem}
\label{sec:IP}

A context will now be introduced in which the results are of interest and 
the assumptions can be satisfied.  We begin with another round of notations.
  Introduce the Gelfand triple $V := H^{1}(D) \subset L^2(D) \subset H^{-1} (D)=: V^*$, 
where the domain $D$ will be understood.  
Furthermore, denote by $\langle \cdot, \cdot \rangle, \|\cdot\|$ the inner product and norm 
on $L^2$, with superscripts 
to denote the corresponding inner product and norm on the Hilbert 
spaces $V$ and $V^*$.  Denote the finite dimensional Euclidean inner product and norms as 
$\langle \cdot, \cdot \rangle, |\cdot|$, with the latter also representing size of a set and absolute value, 
and denote weighted norms by adding a subscript   as  
$\langle,\cdot, \cdot \rangle_A := \langle A^{-\frac12}\cdot, A^{-\frac12}\cdot \rangle$, with corresponding norms
$|\cdot |_A$ or $\|\cdot \|_A$ for Euclidean and $L^2$ spaces, respectively 
(for symmetric, positive definite $A$ with $A^\frac12$ being the unique symmetric square root).
In the following, the generic constant $C$ will be used for the right-hand side of inequalities as necessary, 
its precise value actually changing between usage.

Let $D \subset \bbR^d$ with $\partial D \in C^1$ convex.
For $f \in V^*$, consider the following PDE on $D$:
\begin{align}
\label{eq:uniellip}
- \nabla \cdot ( \hu \nabla p )  & =  f\ , \quad  {\rm on} ~ D\ , \\
p & = 0\ , \quad  {\rm on} ~ \partial D\ ,
\label{eq:bv}
\end{align}
where:
\begin{equation}
\label{eq:expand}
\hu (x) = \bar{u}(x) + \sum_{k=1}^K u_k \sigma_k \phi_k(x) \ . 
\end{equation}
Define $u=\{u_k\}_{k=1}^K$, with $u_k \sim 
U[-1,1]$ i.i.d. This determines the prior distribution for $u$.
  Assume that $\bar{u}, \phi_k \in C^\infty$ for all $k$ and that  
$\|\phi_k\|_\infty =1$.  
 In particular, 
 assume $\{\sigma_k\}_{k=1}^K$ 
 decay\footnote{If $K\rightarrow \infty$ it is important that they decay with a suitable 
 rate in order to ensure $u$ lives almost surely in an appropriate sequence-space,
or equivalently $\hu$ lives in the appropriate function-space.  
However, here we down-weight higher frequencies as necessary only to 
induce certain smoothness properties, while actually for a given value of $u \in E$ 
the resulting permeability 
$\hu \in \widehat{E} \subset C^\infty(D) \subset C(D) \subset L^\infty(D) \subset L^p(D)$ for all $p\geq 1$.} with $k$.  
The state space is $E = \prod_{k=1}^K [-1,1]$.  
It is important that the following property holds:
$$\inf_x \hu(x) \geq  \inf_x \bar{u}(x) - \sum_{k=1}^K \sigma_k \geq u_* > 0$$
so that the operator on the 
left-hand side of \eqref{eq:uniellip} is uniformly elliptic.  
Let $p(\cdot;u)$ denote the weak solution of \eqref{eq:uniellip} for parameter value $u$.  
Define the following the vector-valued function 
$$
\cG(p) = [ g_1( p), \cdots , g_M ( p ) ]^\top\ , 
$$
where $g_m$ are elements of the dual space 
$V^*$ for  $m=1,\ldots, M$.
It is  assumed that the data take the form
\begin{equation}
y = \cG (p) + \xi\ , \quad \xi \sim N(0,\Gamma)\ , \quad \xi \perp u\ , 
\label{eq:data}
\end{equation}
where $N(0,\Gamma)$ denotes the Gaussian random variable with mean $0$ and covariance $\Gamma$, 
and $\perp$ denotes independence. 
The unnormalized density then is given by:
\begin{equation*}
\gamma(u) = e^{-\Phi[\cG(p(\cdot;u))]}  \ ; \quad \Phi(\cG) = \tfrac{1}{2}\, | \cG - y|^2_\Gamma
\ . 
\end{equation*}

Consider the triangulated domains $\{D^l\}_{l=1}^\infty$ 
approximating $D$, 
where $l$ indexes the number of nodes $N(l)$, so that  we have 
$D^1 \subset\cdots \subset  D^{l} \subset D^\infty :=D$,
with sufficiently regular triangles.
Consider a finite element discretization on $D^l$ 
consisting of $H^{1}$ functions $\{\psi_\ell\}_{\ell=1}^{N(l)}$.
In particular, continuous piecewise linear hat functions will be
considered here, the explicit form of which will be given in section \ref{ssec:numset}.
Denote the corresponding space
of functions of the form $\varphi = \sum_{\ell=1}^{N(l)} v_\ell \psi^l_\ell$ by $V^l$, and notice that 
$V^1\subset V^{2}\subset \cdots \subset V^l \subset V$.  
By making the further Assumption 7 of 
\cite{hoan:12} that the weak solution $p(\cdot;u)$ of \eqref{eq:uniellip}-(\ref{eq:bv}) for parameter value $u$ 
is in the space $W=H^2 \cap H^1_0 \subset V$, one obtains a well-defined  
finite element approximation $p^l(\cdot;u)$ of $p(\cdot;u)$.
Thus, the sequence of distributions of interest in this context is:
\begin{equation*}
\gamma_l(u) = e^{-\Phi[\cG(p^l(\cdot;u))]}\ , \quad l=0,1,\ldots, L\ . 
\end{equation*}

 \subsection{Error Estimates}\label{sec:verify}

Notice one can 
take the inner product of \eqref{eq:uniellip} with the solution $p \in V$,
and perform integration by parts on the right-hand side, 
in order to obtain 
$\langle \hu \nabla p, \nabla p \rangle  =  \langle f , p \rangle$.
Therefore
\begin{equation}
u_* \| p \|^2_V = u_* \langle \nabla p, \nabla p \rangle  \leq 
 \langle \hu \nabla p, \nabla p \rangle = 
  \langle f , p \rangle \leq \|f\|_{V^*} \|p\|_V.
\end{equation}
So the following bound holds in $V$, uniformly over $u$:
\begin{equation}
 \| p(\cdot;u) \|_V  \leq \frac{\|f\|_{V^*}}{u_*}\ .
 \label{eq:pvbound}
\end{equation}
Notice that: 
\begin{equation}
|\cG(p)-\cG(p')| = \Big(\sum_{m=1}^M \langle g_m, p-p' \rangle^2 \Big)^{1/2} \leq \| p -p'\|_V
\sum_{m=1}^M \|g_m\|_{V^*} = C \| p-p' \|_V\  .
\label{eq:gunifu}
\end{equation}
So the following uniform bound also holds:
\begin{equation*}
|\cG(p(\cdot;u))| \leq C\,\frac{\|f\|_{V^*}}{u_*}\ . 
\end{equation*}
The uniform bound on $\cG$ provides the Lipschitz bound   
\begin{equation}
|\Phi(\cG) - \Phi(\cG')| \leq C |\cG - \cG'|, 
\label{eq:ctslip}
\end{equation}
obtained as follows:
\begin{align}
\nonumber
|\Phi(\cG) - \Phi(\cG')|  = & \frac12\left |  |\cG - y|_\Gamma^2 -  |\cG' - y|_\Gamma^2 \right |\\
\nonumber
 = &\left | |\cG|_\Gamma^2 - |\cG'|_\Gamma^2 + 2 \langle \cG' - \cG , y \rangle _{\Gamma} \right |\\
\nonumber
\leq & \left (  |\cG| + |\cG'| + 2 |y| \right ) |\Gamma^{-1}| |\cG - \cG'|\ ,
\end{align}
Setting $\cG'=0$ gives the boundedness of 
$\Phi$.

Considering  some sequence $h_l$ indicating the maximum diameter of an individual element
at level $l$, with $h_l \rightarrow 0$ (e.g. $h_l = 2^{-l}$),
the following asymptotic bound holds for continuous piecewise linear hat functions
\cite{ciarlet1978finite}\footnote{Higher order finite elements can yield stronger convergence rates, 
but will not be considered here in the interest of a more streamlined presentation.}
\begin{equation}
\|p(\cdot;u) - p^l(\cdot;u)\|_V \leq C h_l \| p(\cdot;u)\|_W\ . 
\label{eq:femconv}
\end{equation} 
Furthermore, Proposition 29 of \cite{hoan:12} provides a uniform bound based on the 
following decomposition of \eqref{eq:uniellip}:
\[
-\Delta p = \frac{1}{\hu} \left ( f + \nabla \hu \cdot \nabla p 
\right)\ .
\]
Thus, we have 
\begin{align}
\nonumber
{\rm sup}_{u} \|p(\cdot;u)\|_W & \leq  C' {\rm sup}_{u} \|\Delta p(\cdot;u)\| \\
\nonumber
& \leq  \frac{C'}{u_*} {\rm sup}_{u}\left( \|f\| + \|\hu\|_V \|p\|_V \right ) \\
& \leq  C \|f\|\ ,
\label{eq:breakdownw}
\end{align}
where the first line holds by equivalence of norms, the second holds
since $\hu \in C^\infty$, by the triangle inequality and Cauchy-Schwarz inequality,
and the last line holds by \eqref{eq:pvbound} and the fact $\|f\|_{V^*} \leq c \|f\|$ for some $c$.
The constant $C$ depends on $u_*, \|\nabla \hu\|_\infty, C',$ and $c$ .
Note that $ \|\hu\|_V \leq \| \nabla \hu\|_\infty \leq C'' < \infty$ by \eqref{eq:expand}.
Note that the bound \eqref{eq:breakdownw} in \eqref{eq:femconv} 
together with \eqref{eq:pvbound} provides a uniform bound over $l$ for 
$\cG^l$, defined by $\cG^l: u \mapsto \cG(p^l(\cdot;u))$, 
following the same argument as \eqref{eq:gunifu},
which means that the Lipschitz bound in \eqref{eq:ctslip}
holds here over different $l$ as well.  

Now, the following holds by \eqref{eq:femconv}, 
\eqref{eq:breakdownw}, \eqref{eq:pvbound}, and the triangle inequality
\begin{equation}
\|p^{l}(\cdot;u) - p^{l-1}(\cdot;u)\|_V \leq C h_l\ .
\label{eq:lincrement}
\end{equation}
Hence, from (\ref{eq:gunifu}) 
\begin{equation}
|\cG^l(u) - \cG^{l-1}(u) | = |\cG(p^{l}(\cdot;u)) - \cG(p^{l-1}(\cdot;u)) | \leq C h_l\ ,
\label{eq:glincrement}
\end{equation}
where $C$ is independent of the realization of $u$.
\begin{prop}
\label{pr:V}
For $G_{l-1}(u) := \exp\{\Phi(\cG^{l-1}(u)) - \Phi(\cG^{l}(u))\}$ one has the following 
estimates, uniformly in $u$: 
 \begin{equation}
1 - \cO(h_l) = \underline{C}_l := e^{-C h_l} \leq G_{l-1} = \exp\{\Phi(\cG^{l-1}) - \Phi(\cG^{l})\} \leq e^{C h_l} 
=: \overline{C}_l = 1 + \cO(h_l).
\label{eq:wlestimate}
\end{equation}
\end{prop}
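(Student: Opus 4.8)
The plan is a short chaining argument: combine the Lipschitz estimate on $\Phi$ with the finite-element increment bound on $\cG^l$, and then exponentiate. First I would recall \eqref{eq:ctslip} together with the remark following \eqref{eq:breakdownw}, which extends that Lipschitz bound uniformly across the discretisation levels by virtue of the uniform-in-$l$ bound on $\cG^l$ coming from \eqref{eq:femconv}, \eqref{eq:breakdownw} and \eqref{eq:pvbound}. This yields a constant $C$, independent of $u$ and of $l$, with
$$
|\Phi(\cG^{l-1}(u)) - \Phi(\cG^{l}(u))| \leq C\,|\cG^{l-1}(u) - \cG^{l}(u)|\ .
$$

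Next I would invoke \eqref{eq:glincrement}, namely $|\cG^{l}(u) - \cG^{l-1}(u)| \leq C h_l$ with $C$ independent of the realisation of $u$. Substituting and absorbing the two constants into a single $C$ gives $|\Phi(\cG^{l-1}(u)) - \Phi(\cG^{l}(u))| \leq C h_l$, uniformly in $u$. Writing $\delta_l(u) := \Phi(\cG^{l-1}(u)) - \Phi(\cG^{l}(u))$, this is $-C h_l \leq \delta_l(u) \leq C h_l$, so monotonicity of the exponential gives
$$
e^{-C h_l} \leq G_{l-1}(u) = e^{\delta_l(u)} \leq e^{C h_l}\ ,
$$
which is precisely the two-sided estimate in \eqref{eq:wlestimate}, identifying $\underline{C}_l = e^{-Ch_l}$ and $\overline{C}_l = e^{Ch_l}$. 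Finally, since $h_l \to 0$, a first-order Taylor expansion of $t \mapsto e^t$ about $0$ gives $e^{\pm C h_l} = 1 \pm \cO(h_l)$, which produces the claimed $1 - \cO(h_l)$ and $1 + \cO(h_l)$ forms.

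I do not anticipate any real obstacle here; the only point needing care is that the Lipschitz constant for $\Phi$ and the constant in the increment bound \eqref{eq:glincrement} must both be uniform in $l$ as well as in $u$. This has, however, already been arranged in the preceding discussion: the uniform-over-$l$ bound for $\cG^l$ deduced from \eqref{eq:breakdownw}, \eqref{eq:femconv} and \eqref{eq:pvbound} is exactly what lets \eqref{eq:ctslip} be applied level-by-level with a single constant. Hence the proposition follows by direct substitution followed by monotonicity and Taylor expansion of the exponential.
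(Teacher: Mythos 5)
Your argument is correct and is exactly the paper's proof: the paper also combines the level-uniform Lipschitz bound \eqref{eq:ctslip} with the increment estimate \eqref{eq:glincrement} and then exponentiates, merely stating this in one line. Your additional care about uniformity of the constants in $l$ and $u$ is precisely what the preceding discussion in the paper establishes, so the proposal matches the intended proof.
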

\begin{proof}
In combination with \eqref{eq:ctslip}, equation (\ref{eq:glincrement}) gives
the stated result.
\end{proof}

\begin{prop}[Bias]
\label{pr:bias}
Let $g\in \testfunction(E)$.
Then 
$$|\mathbb{E}_{\eta_L}[g(U)] - \mathbb{E}_{\eta_\infty}[g(U)]| \le C h_L\  .$$
\end{prop}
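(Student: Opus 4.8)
The plan is to bound the total variation distance $\int_E|\eta_L(u)-\eta_\infty(u)|\,du$ by $Ch_L$; since $g$ is bounded, the claim then follows at once from $|\mathbb{E}_{\eta_L}[g(U)]-\mathbb{E}_{\eta_\infty}[g(U)]|\le \|g\|_\infty\int_E|\eta_L-\eta_\infty|\,du$. Write $\gamma_\infty(u)=e^{-\Phi[\cG(p(\cdot;u))]}$ and $Z_\infty=\int_E\gamma_\infty(u)\,du$, so $\eta_\infty=\gamma_\infty/Z_\infty$, and recall $\eta_L=\gamma_L/Z_L$. The first step is to record the uniform controls already available: $\Phi\ge 0$ and $\Phi$ is bounded above by a constant uniformly over $u$ and over the level (this is exactly the boundedness of $\Phi$ noted after \eqref{eq:ctslip}, using the uniform bounds on $\cG^l$ and on $\cG(p(\cdot;u))$ coming from \eqref{eq:pvbound}--\eqref{eq:breakdownw}). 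Hence $0<c_1\le\gamma_l(u),\gamma_\infty(u)\le c_2<\infty$ for all $l$ and all $u$, so that $Z_L$ and $Z_\infty$ lie in $[c_1|E|,c_2|E|]$ with $|E|=\int_E du<\infty$; in particular $Z_L,Z_\infty$ are bounded away from $0$ and $\infty$.

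The second step is the pointwise estimate $\|\gamma_L-\gamma_\infty\|_\infty\le Ch_L$. Applying \eqref{eq:femconv} with $l=L$, together with \eqref{eq:breakdownw} and \eqref{eq:pvbound} (the same chain that produced \eqref{eq:glincrement}), gives $|\cG^L(u)-\cG(p(\cdot;u))|\le Ch_L$ uniformly in $u$; the Lipschitz bound \eqref{eq:ctslip}, which holds across levels including the continuum one, then yields $|\Phi(\cG^L(u))-\Phi(\cG(p(\cdot;u)))|\le Ch_L$ uniformly in $u$; and since both exponents lie in a fixed bounded interval, $|e^{-a}-e^{-b}|\le e^{-\min(a,b)}|a-b|\le C|a-b|$ gives $|\gamma_L(u)-\gamma_\infty(u)|\le Ch_L$ for every $u$. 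This is precisely the statement of Proposition \ref{pr:V} transcribed from the pair $(l-1,l)$ to the pair $(L,\infty)$. Integrating then gives $|Z_L-Z_\infty|\le\int_E|\gamma_L-\gamma_\infty|\,du\le Ch_L|E|$.

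The third step assembles these via the decomposition
\[
\eta_L(u)-\eta_\infty(u)=\frac{\gamma_L(u)-\gamma_\infty(u)}{Z_L}+\gamma_\infty(u)\,\frac{Z_\infty-Z_L}{Z_LZ_\infty}\ ,
\]
bounding each term using $Z_L,Z_\infty\ge c_1|E|$, $\gamma_\infty\le c_2$, and the two estimates just obtained; this gives $\|\eta_L-\eta_\infty\|_\infty\le Ch_L$, hence $\int_E|\eta_L-\eta_\infty|\,du\le Ch_L|E|$, and absorbing $|E|$ and $\|g\|_\infty$ into $C$ completes the proof. There is no genuine obstacle here — it is a routine perturbation argument — and the only point requiring care, which is also its real content, is that every constant ($c_1,c_2$, the Lipschitz constant of $\Phi$, the finite-element constant) is uniform in $u\in E$ and in the level; this is exactly what the bounds \eqref{eq:pvbound}--\eqref{eq:breakdownw} and the boundedness of $\Phi$ were set up to guarantee. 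An equivalent route, mirroring the proof of Theorem \ref{theo:main_error}, would be to write $\eta_\infty(g)-\eta_L(g)=\eta_L\big(g\,(\tfrac{\gamma_\infty/\gamma_L}{\eta_L(\gamma_\infty/\gamma_L)}-1)\big)$ and bound the bracket by $C\,\|\gamma_\infty/\gamma_L-1\|_\infty\le Ch_L$ using the reasoning behind Proposition \ref{pr:V}, which never touches the normalising constants explicitly.
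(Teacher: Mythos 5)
Your argument is correct and rests on the same core estimate as the paper: the uniform bound $|\Phi(\cG^L(u))-\Phi(\cG(p(\cdot;u)))|\le C h_L$ obtained from \eqref{eq:femconv}, \eqref{eq:gunifu} and \eqref{eq:ctslip}, i.e.\ Proposition \ref{pr:V} transported from consecutive levels to the pair $(L,\infty)$. In fact the paper's own (one-line) proof is precisely your closing remark --- it writes the bias as $\mathbb{E}_{\eta_\infty}\big[g(U)\big(\tfrac{d\eta_L}{d\eta_\infty}-1\big)\big]$ and bounds the bracket by the same reasoning as Proposition \ref{pr:V} --- while your primary route controls $\gamma_L-\gamma_\infty$ and $Z_L-Z_\infty$ separately to get a total-variation bound; the two assemblies are equivalent routine perturbation arguments using the same uniform-in-$u$ constants.
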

\begin{proof}
It follows from the same reasoning as in Proposition \ref{pr:V},
upon observing that 
$$\mathbb{E}_{\eta_L}[g(U)] - \mathbb{E}_{\eta_\infty}[g(U)]
= \mathbb{E}_{\eta_\infty}\left[g(U)\left( \frac{d \eta_L}{d\eta_\infty} - 1\right)\right]\  .$$
\end{proof}

\subsection{Verification of Assumptions}
\label{sec:v}

Assumption (A\ref{hyp:A}) is satisfied by letting 
$$\underline{C} := \inf_{l \geq 1} 
 \underline{C}_l\ ;  
\quad 
\overline{C} := \sup_{l \geq 1} 
\overline{C}_l\ .$$ 
Notice that the asymptotic bounds of Proposition \ref{pr:V} imply that 
$\underline{C}_l$ is increasing with $l$ while $\overline{C}_l$
are decreasing with $l$.  Therefore, these will actually be minimum and maximum
over a sufficiently large set of low indices.

%
%
%
%
Assumption (A\ref{hyp:B}) 
can be shown to hold in this context, 
if a Gibbs sampler is constructed. 
Let $\theta$ be the uniform measure on $[-1,1]$ and consider a probability measure $\pi$ on $E:=\prod_{i=1}^{K}[-1,1]$ with density w.r.t.~the measure $\bigotimes_{i=1}^{K} \theta(du_i)$:
$$
\pi(u) = \frac{\exp\{-\Phi(u)\}}{\int_{E}\exp\{-\Phi(u)\}\bigotimes_{i=1}^{K} \theta(du_i)}
$$
where it is assumed that $\forall u\in E$, $\Phi(u)\in[0,\Phi^*]$.   
This is the setting above, for all $l$, 
following from equations \eqref{eq:ctslip} and \eqref{eq:glincrement}. 


Let $k\in\mathbb{N}, k<K$ be given
and consider a partition of $[1,\dots, K]$ into $k$ disjoint subsets $(a_i)_{i=1}^k$.
For example $k=2$ and 
$a_1$ and $a_2$ are the sets of (positive) odd and even numbers up to $K$, respectively. 

One can consider the Gibbs sampler to generate from $\pi$, with kernel:
$$
M(u,du') = \Big(\prod_{j=1}^{k} \pi(u_{a_j}'|u_{a_1:a_{j-1}}',u_{a_{j+1}:a_{k}}) \Big)\bigotimes_{i=1}^{K} \theta(du_i')
$$
with 
$$
\pi(u_{a_j}'|u_{a_1:a_{j-1}}',u_{a_{j+1}:a_{k}}) = \frac{\pi(u_{a_1:a_{j}}',u_{a_{j+1}:a_{k}})}{\int_{[-1,1]^{|\{a_j\}|}} \pi(u_{a_1:a_{j}}',u_{a_{j+1}:a_{k}}) \bigotimes_{i\in(a_j)} \theta(du_i')}.
$$
One can, for example, perform rejection sampling on $\pi$ using the prior as a proposal (and accepting with probability $\exp\{-\Phi(u)\}$)
and we would still have a theoretical acceptance probability of
$$
\int_{E}\exp\{-\Phi(u)\}\bigotimes_{i=1}^{K} \theta(du_i) \geq \exp\{-\Phi^*\}.
$$
Sampling from the full conditionals will have a higher-acceptance probability and thus the Gibbs sampler is not an unreasonable algorithm.


\begin{prop}
For any $u,\tilde{u}\in E$ 
$$
M(\tilde{u},du') \geq \exp\{-2\Phi^* (k-1)\} M(u,du').
$$
\end{prop}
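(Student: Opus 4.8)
The plan is to compute the density of $M(\tilde u,\cdot)$ relative to $M(u,\cdot)$ explicitly and bound it below pointwise. First I would note that the normalising constant of $\pi$ cancels inside every full conditional, so that for $j=1,\dots,k$ one can write
$$
\pi(u'_{a_j}\mid u'_{a_1:a_{j-1}},u_{a_{j+1}:a_k})
= \frac{\exp\{-\Phi(u'_{a_1:a_j},u_{a_{j+1}:a_k})\}}
{\int_{[-1,1]^{|a_j|}}\exp\{-\Phi(u'_{a_1:a_j},u_{a_{j+1}:a_k})\}\bigotimes_{i\in a_j}\theta(du'_i)}\ .
$$
The key structural observation is that for $j=k$ this expression does not involve $u$ at all (the last block is conditioned only on the already-updated coordinates $u'_{a_1:a_{k-1}}$); hence only the $k-1$ factors with $j\le k-1$ can differ between $M(\tilde u,du')$ and $M(u,du')$, and the common factor $\bigotimes_{i=1}^K\theta(du'_i)$ cancels. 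This is precisely the source of the exponent $k-1$.

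Next, for each fixed $j\le k-1$ I would bound the ratio of the $j$-th factors using $0\le\Phi\le\Phi^*$ on $E$, applied twice. In the ``unnormalised'' part the numerator at $\tilde u$ is $\ge e^{-\Phi^*}$ while the numerator at $u$ is $\le 1$; in the ratio of the two normalising integrals (which are taken against the same measure $\bigotimes_{i\in a_j}\theta$, so its total mass is irrelevant) the integrand at $\tilde u$ is $\ge e^{-\Phi^*}$ and the integrand at $u$ is $\le 1$, giving a bound $\ge e^{-\Phi^*}$ for that ratio as well. Therefore each relevant factor satisfies
$$
\frac{\pi(u'_{a_j}\mid u'_{a_1:a_{j-1}},\tilde u_{a_{j+1}:a_k})}{\pi(u'_{a_j}\mid u'_{a_1:a_{j-1}},u_{a_{j+1}:a_k})}\ \ge\ e^{-2\Phi^*}\ ,
$$
while the $j=k$ factor contributes exactly $1$. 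Multiplying over $j=1,\dots,k$ gives the pointwise-in-$u'$ inequality $\ge e^{-2\Phi^*(k-1)}$ between the densities of $M(\tilde u,du')$ and $M(u,du')$ with respect to $\bigotimes_{i=1}^K\theta(du'_i)$; integrating this over an arbitrary measurable set $A$ yields $M(\tilde u,A)\ge e^{-2\Phi^*(k-1)}M(u,A)$, which is the claim. (For the odd/even partition, $k=2$, so the constant is $e^{-2\Phi^*}$, supplying the minorisation $\rho$ needed for (A\ref{hyp:B}).)

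There is no real obstacle: the argument is elementary once one spots the cancellation of the $\pi$-normalising constant and the $u$-independence of the last block. The only points deserving a line of care are confirming that $\theta$ being merely a finite measure suffices — since only ratios of integrals against the same product of $\theta$'s occur — and passing from the pointwise density inequality to the stated inequality between the measures $M(\tilde u,\cdot)$ and $M(u,\cdot)$, which is immediate by monotonicity of the integral.
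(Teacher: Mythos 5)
Your proof is correct and follows essentially the same route as the paper: bound each of the first $k-1$ full-conditional ratios by $e^{-2\Phi^*}$ using $\Phi\in[0,\Phi^*]$ on both the unnormalised density and the normalising integral, note the $j=k$ factor (and the common base measure $\bigotimes_{i}\theta(du_i')$) is identical for both kernels, and multiply. Your extra remarks on the cancellation of $\pi$'s normalising constant and on passing from the pointwise density bound to the measure inequality only make explicit what the paper leaves implicit.
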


\begin{proof}
Consider
\begin{eqnarray*}
\frac{\pi(u_{a_j}'|u_{a_1:a_{j-1}}',u_{a_{j+1}:a_{k}})}{\pi(u_{a_j}'|u_{a_1:a_{j-1}}',\tilde{u}_{a_{j+1}:a_{k}})}  & = & \frac{\pi(u_{a_1:a_j}',u_{a_{j+1}:a_k})}{\pi(u_{a_1:a_j}',\tilde{u}_{a_{j+1}:a_k})}\frac{\int_{[-1,1]^{|a_j|}} \pi(u_{a_1:a_j}',\tilde{u}_{a_{j+1}:a_k}) \bigotimes_{i\in (a_j)} 
\theta(du_i')}
{\int_{[-1,1]^{|a_j|}} \pi(u_{a_1:a_j}',u_{a_{j+1}:a_k})\bigotimes_{i\in (a_j)} 
\theta(du_i')}\\
&  \leq & \exp\{2\Phi^*\}.
\end{eqnarray*}
Thus, since 
$$
M(u,du') = \Big(\prod_{j=1}^{k} \pi(u_{a_j}'|u_{a_1:a_{j-1}}',u_{a_{j+1}:a_{k}}) \Big)\bigotimes_{i=1}^{K} \theta(du_i'),
$$
and
$$
M(\tilde{u},du') = \Big(\prod_{j=1}^{k} \pi(u_{a_j}'|u_{a_1:a_{j-1}}',\tilde{u}_{a_{j+1}:a_{k}}) \Big)\bigotimes_{i=1}^{K} \theta(du_i'),
$$
and the final element in each product is identical, it follows that
$$
M(\tilde{u},du') \geq \exp\{-2\Phi^* (k-1)\} M(u,du').
$$
as was to be proved.
\end{proof}

\section{Numerical Results}\label{sec:numerics}

\subsection{Set-Up}
\label{ssec:numset}

In this section a 
1D version of the elliptic PDE problem in \eqref{eq:uniellip} is considered. 
Let $D=[0,1]$ and 
consider $f(x)=100x$. For the prior specification of $u$, 
set $K=2$, $\bar{u}(x)=0.15=const.$, 
$\sigma_1=0.1$, $\sigma_2=0.025$, $\phi_1(x)=\sin(\pi x)$ and $\phi_2(x) = \cos(2\pi x)$.  
The forward problem at resolution level $l$ is solved using a finite element method with 
piecewise linear shape functions on a uniform mesh of width
$h_l=2^{-(l+k)}$, for some starting $k\geq1$ (so that there are at least two grid-blocks in the finest, $l=0$, case).  
Thus,  on the $l^{th}$ level the finite-element basis functions are $\{\psi^l_i\}_{i=1}^{2^{l+k}-1}$ defined as (for $x_i=i\cdot 2^{-(l+k)}$) \cite{ciarlet1978finite}:
\[
\quad\quad\quad\quad\quad\quad\quad\quad\quad\quad\quad\quad
\psi_i^{l}(x) = \Bigg \{
\hspace{-50\in}
\begin{split}
(1/h_l)[x - (x_i-h_l) ] &\quad if \quad x\in[x_i-h_l,x_i], \quad\quad\quad\quad\quad\quad
\quad\quad\quad\quad\quad\quad\quad\quad\quad\quad\quad\quad
\quad\quad\quad\quad
\\
(1/h_l)[x_i+h_l -x ] & \quad if \quad x\in [x_i,x_i+h_l]. 
\quad\quad\quad\quad\quad\quad\quad
\quad\quad\quad\quad\quad\quad\quad\quad\quad\quad\quad
\quad\quad\quad\quad\quad
\end{split}
\]
The functional of interest $g$ is taken as the solution of the forward problem at the midpoint of the domain, that is
$g(u)=p(0.5;u)$.  The observation operator is $\cG(u) = [p(0.25;u),p(0.75;u)]^\top$, and the observational noise 
covariance is taken to be $\Gamma=0.25^2 I$.

To solve the PDE, the ansatz $p_l(x) = \sum_{i=1}^{2^{l+k}-1} p^l_i \psi^l_i(x)$ is plugged into 
\eqref{eq:uniellip}, and projected onto each basis element:
\[
- \Big \langle \nabla \cdot\Big ( \hat{u} \nabla  \sum_{i=1}^{2^{l+k}} p^l_i \psi^l_i(x) \Big), \psi^l_j(x) \Big \rangle 
= \langle f , \psi^l_j \rangle \ ,
\]
resulting in the following linear system:
\[
{\bf A}^l(u) {\bf p}^l = {\bf f}^l,
\]
where we introduce the matrix ${\bf A}^l(u)$ with entries 
$A^l_{ij}(u) = \langle \widehat{u} \nabla \psi^l_i , \nabla \psi^l_j \rangle$, and vectors 
${\bf p}^l, {\bf f}^l$ with entries $p^l_i$ and $f^l_i =  \langle f , \psi^l_i \rangle$, respectively.
Omitting the index $l$, the matrix is sparse and tridiagonal with 
\begin{equation*}
A_{(i-1)i}(u) = A_{i(i-1)}(u)=-(1/h^2) \int_{x_{i-1}}^{x_i} \widehat{u}(x)dx\ ,\quad  
A_{ii}=(1/h^2) \left ( \int_{x_{i-1}}^{x_i} \widehat{u}(x)dx + \int_{x_{i}}^{x_{i+1}} \widehat{u}(x)dx \right)\ , 
\end{equation*}
 and 
zero otherwise.  The elements $f_i$ are computed analogously.  The system can therefore be solved with 
cost
$\cO(2^{l+k})$, corresponding to a computational cost rate of $\gamma=1$.

To get some understanding about the numerics and validate the theory, a number of
results and figures will be generated. 
First, the PDE solution is obtained for a reference value of $u$ 
on a very fine mesh. 
This reference value of $p$ is used to  numerically obtain
the rate $\beta$ in upper bounds of the form $h_l^{\beta}$  for the quantities  in 
\eqref{eq:femconv}, hence also in  \eqref{eq:lincrement}, over increasing $l$.  
Then, 
$N_l$ are optimally allocated using this 
$\beta$ and the $\gamma$ above 
using the formulae from 
Section \ref{ssec:multilevel_component}.  
Following the error analysis in Section \ref{sec:verify}, once $\beta$ has been decided, 
we have $\alpha=\beta/2$. 
Then, observing the cost/error trend for a range 
of errors $\epsilon$, we expect to observe the appropriate scaling between computational 
cost and mean squared error (e.g.\@ MSE $\propto$ cost$^{-1}$ for MLSMC).

%

\subsection{Results}

The following setting is simulated. 
The sequence of step-sizes is given by $h_l = 2^{-(l + k)}$, $k = 3$.
The data $\cG(u)$ is simulated with 
a given $u_i \sim U[-1,1]$ (i=1,2) and $h = 2^{-20}$. The
observation variance and other algorithmic elements are as stated above. 
We will contrast the accuracy of two algorithms.  
The first is (i) MLSMC as detailed above; the second is 
(ii) plain SMC: the same sequence of distributions 
as MLSMC, but using 
equal number of particles for a given $L$, 
and averaging only the samples at the last level.
For  both MLSMC and SMC algorithms, random walk MCMC kernels
were used (iterated 10 times) with scale parameters falling deterministically 
(the ratio of standard deviation used for  target $\eta_l$ 
versus the one for target $\eta_{l+1}$
is set to $(l+1)/l$).

\subsubsection{Numerical Estimation of Algorithmic Rates}
To numerically estimate the rate $\beta$, 
the quantity  
$\|p_l(\cdot;u) - p_{l-1}(\cdot;u)\|_V$ is computed over increasing
levels $l$. Figure~\ref{fig:rate1} shows these computed values plotted against $h_l$ on
base-2 logarithmic scales. A fit of a linear model 
 gives rate $\beta = 1.935$, and a similar experiment 
gives $\alpha=0.993$.
 This is consistent with the rate $\beta=2$ and $\alpha=\beta/2$ expected 
 from the theoretical error analysis in Section \ref{sec:verify}
 (and agrees also with other literature \cite{ciarlet1978finite}). 
An expensive preliminary MLSMC is executed to get some first results over the algorithmic 
variabilty. In this execution the number of particles are set with the recursion 
$N_l = \lceil{2N_{l + 1}}\rceil$ and $N_L =
1000$. The simulations are repeated 100 times.
The estimated variance of $\eta_l^{N_l}(gG_l) / \eta_l^{N_l}(G_l) - \eta_l^{N_l}(g)$, as a
proxy of $V_l$, is 
plotted in Figure~\ref{fig:var} 
against $h_l$ on the same scales as before. The estimate
of the rate now is $\beta = 5.06$. 
In this case 
the numerical estimate is much stronger 
than the theoretical rate used here.  
In fact, under suitable regularity conditions one may theoretically 
obtain the rate $\beta=4$ with a stronger $L^2(D)$ bound on 
$\|p(\cdot;u) - p_{l}(\cdot;u)\|$, which follows from an Aubin-Nitsche duality argument
\cite{ern2004theory}.  However, even this stronger estimate is still beat by the empirical estimate.
Nonetheless, the objective of the present work is to illustrate the theory and not to really
optimize the implementation.  In fact, similar results as presented below are obtained using either 
rate, presumably owing to the fact that $\beta=2>\zeta$, 
which is already the optimal relationship of $\beta$ and $\zeta$ and hence already 
provides the optimal asymptotic behavior of MSE$\propto$cost$^{-1}$.  
In case an optimal $\beta$ induces a change in the relationship between
$\beta$ and $\zeta$, one may expect a 
change in asymptotic behavior of MSE vs. cost, 
which justifies such empirical rate estimation.



%

\begin{figure}\centering
  \includegraphics{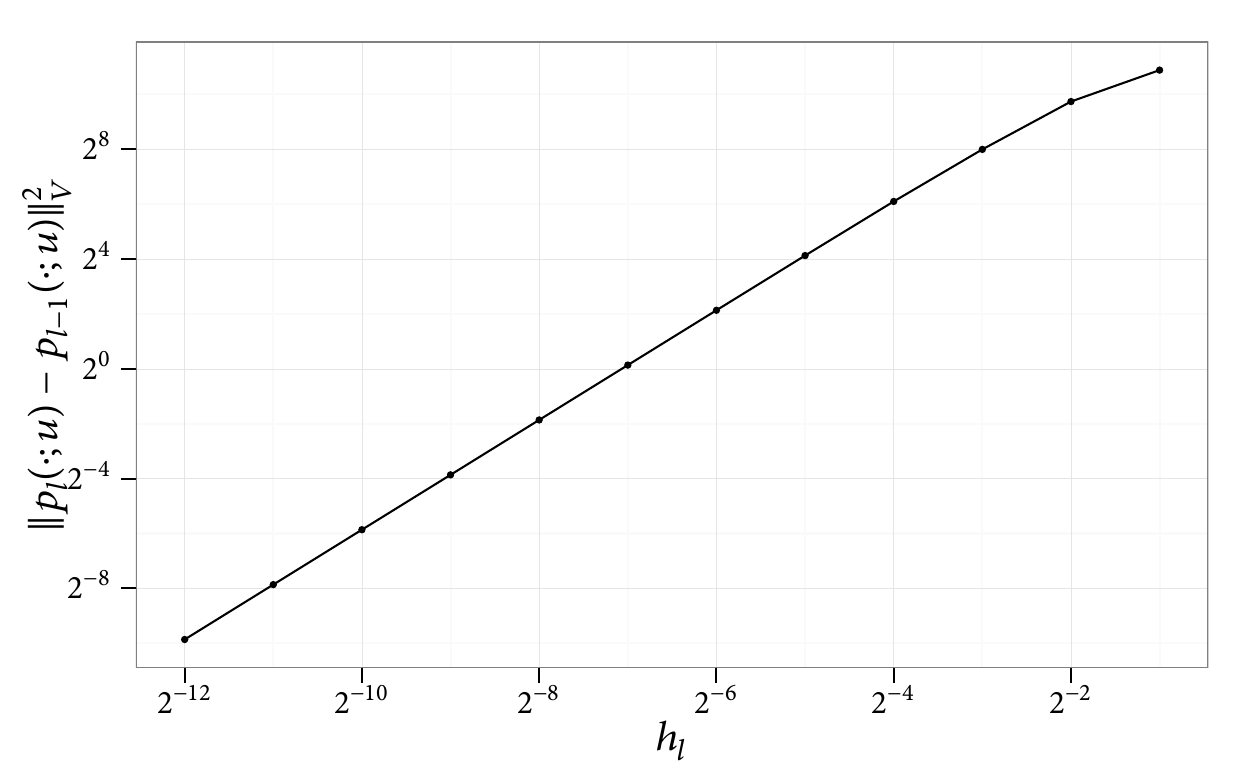}
  \caption{An analyical calculation 
  of $\|p_l(\cdot;u)-p_{l-1}(\cdot;u)\|^{2}_V$,  with $u$ equal to the true value used to generate the data, 
 for various choices 
  of $h_l$.}
  \label{fig:rate1}
\end{figure}

%


\begin{figure}\centering
  \includegraphics{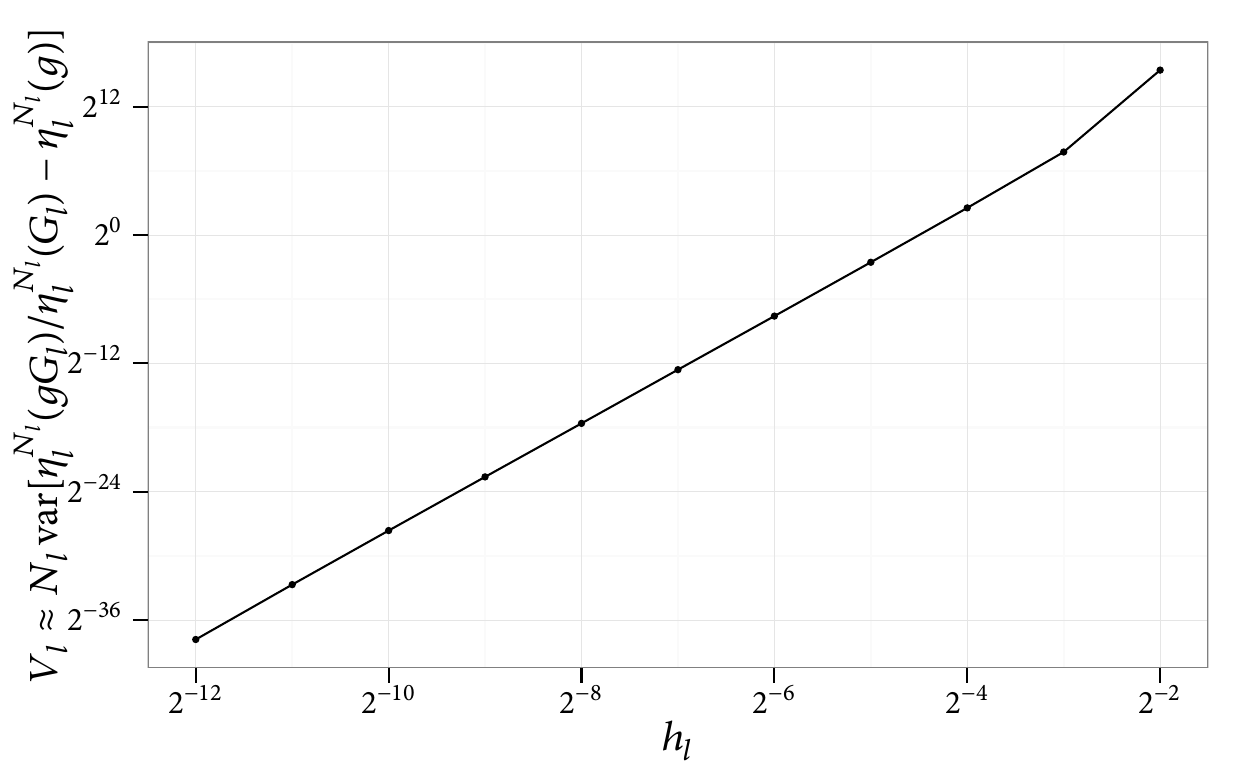}
  \caption{Variance estimates.}
  \label{fig:var}
\end{figure}

\subsubsection{Algorithmic Performance with Diminishing MSE}

Given the choices of $\alpha$ and $\beta$ as above, 
the performance of the MLSMC algorithm is benchmarked by simulating samplers with different
maximum levels $L$. 
The value of $\eta_\infty(g)$ was first 
estimated with the SMC algorithm targeting $\eta_{13}(g)$ ($h^{-16}$), 
with $N_L = 1000$.  This sampler was realized 100 times and 
the average of the estimator is take as the ground truth.
The standard deviation is much smaller than the smallest bias of subsequent simulations.
When updating $L \rightarrow 
L+1$, 
the new bias is approximately a factor 
$2^{-\alpha}$ smaller than 
the previous one.
Therefore the two sources 
of error in \eqref{eq:dec} can be roughly balanced by setting $N_l'=2^{2\alpha}N_l$, 
for $l=0,1,\ldots,L$, and $N_{L+1}' = 2^{-(\beta+\zeta)/2}N'_{L}$. 
%
To check the effectiveness of the MCMC steps employed for dispersing the particles within 
the SMC methods, we show in Figure~\ref{fig:acc}  the average (over the number of particles)
acceptance probability for 
each of the $L$ iterations 
when 
the MCMC was executed 
(here $L=15$).
The plot indicates reasonable performance of this
particular aspect of the sequential algorithm.

\begin{figure}[!h]\centering
  \includegraphics{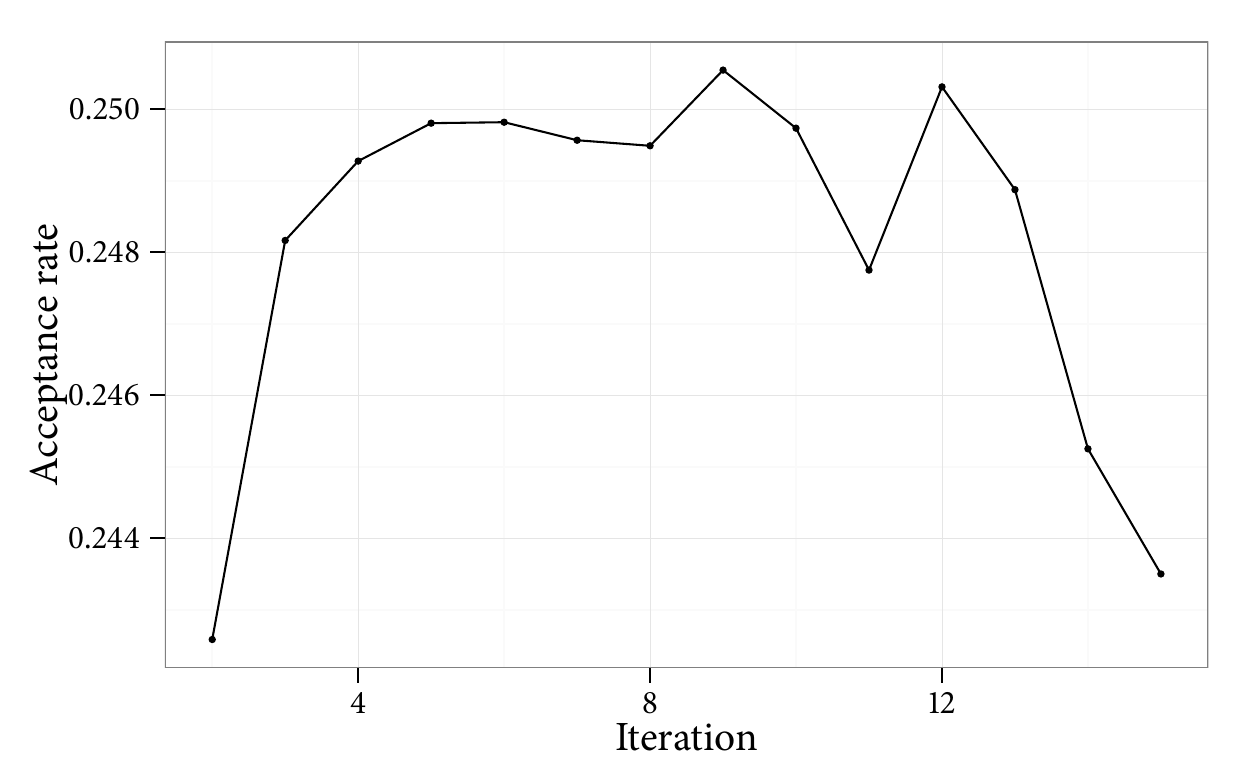}
  \caption{Acceptance rates of MCMC kernel.}
  \label{fig:acc}
\end{figure}

The error-vs-cost plots for SMC and MLSMC are shown in
Figure~\ref{fig:mlsmc}. 
Note that the bullets in the graph correspond to different choices of $L$ 
(ranging from $L=0$ to $L=5$). 
Then, as mentioned earlier, for a given $L$, 
the single level SMC 
uses a fixed number of particles over
the sequence of targets over $l=0,1,\ldots,L$, and 
this number is tuned 
to have approximately the same computational cost as MLSMC with the same $L$.  
The MSE data points are each estimated with $100$ realizations of the given sampler.
The fitted linear model of 
$\log\textrm{MSE}$ against
$\log\text{Cost}$ has a gradient of $-0.6493$ and $-1.029$ for SMC and MLSMC
respectively. 
This verifies numerically 
the expected asymptotic behavior 
MSE$\propto$cost$^{-1}$ for MLSMC, determined from the theory.  
Furthermore, the first rate indicates that the single level
SMC performs similarly to the single level vanilla MC with asymptotic behavior MSE$\propto$cost$^{-2/3}$.  
The results clearly establish 
the potential improvements of MLSMC versus a standard SMC sampler.
It is remarked that the MLSMC algorithm can be improved in many directions and this is subject to future work.

\begin{figure}\centering
  \includegraphics[scale=1.05]{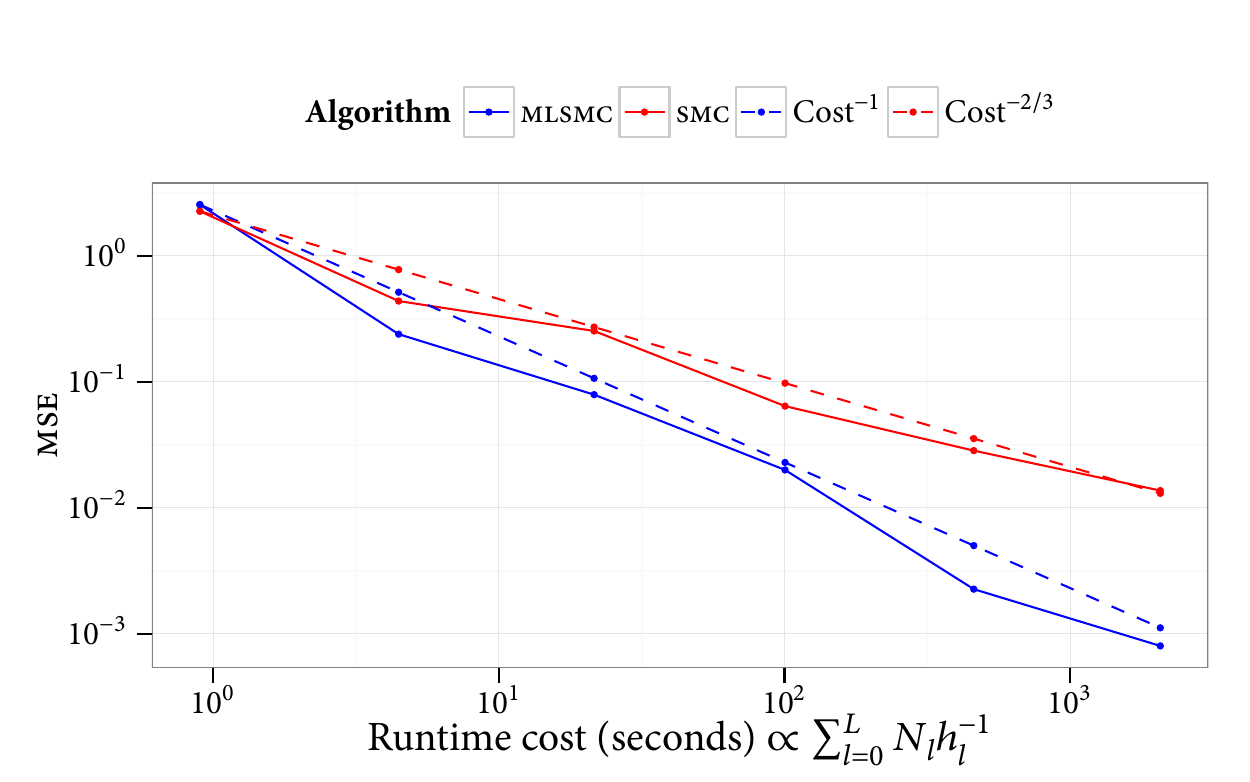}
  \caption{Mean square error against computational cost.}
  \label{fig:mlsmc}
\end{figure}

\subsubsection*{Acknowledgements}
AJ, KL \& YZ were supported by an AcRF tier 2 grant: R-155-000-143-112. AJ is affiliated with the Risk Management Institute and the Center for Quantitative Finance at NUS. RT, KL \& AJ were additionally supported by 
King Abdullah University of Science and Technology (KAUST).  AB was supported  by the Leverhulme Trust Prize.

\appendix

\section{Technical Results}


Introduce the following notations. For $\varphi\in\mathcal{B}_b(E)$, $p\geq 0$ and $\eta\in\mathcal{P}(E)$ 
$$
\Phi_p(\eta)(\varphi) = \frac{\eta(G_{p-1}M_{p}(\varphi))}{\eta(G_{p-1})}
$$
where  $M_{p}(\varphi)(u) = \int_E \varphi(v) M_{p}(u,dv)$.
Define the operator $Q_{p+1}(u,dv) = G_p(u) M_{p+1}(u,dv)$ and denote $Q_{p,n}(\varphi)= Q_{p+1}(\cdots Q_n(\varphi))$ ($0\leq p  \leq n$, $Q_{n,n}$ is the identity operator). 
Also set
$$
D_{p,n}(\varphi) = \frac{Q_{p,n}(\varphi-\eta_n(\varphi))}{\eta_p(Q_{p,n}(1))},
$$
$D_{n,n}$ is the identity operator,
and define the following 
\begin{eqnarray}
V_p^{N_p}(\varphi) & = & \sqrt{N_{p}}\,[\eta_p^{N_{p}}-
\Phi_p(\eta_{p-1}^{N_{p-1}}
)](\varphi) \ ,  \nonumber\\
R_{p+1}^{N_p}(D_{p,n}(\varphi)) & = & \frac{\eta_p^{N_p}(D_{p,n}(\varphi))}{\eta_p^{N_p}(G_p)}[\eta_p(G_p)-\eta_p^{N_p}(G_p)] \ , \label{eq:r_def}
\end{eqnarray}
with the convention that $\Phi_0(\eta_{-1}^{N_{-1}})\equiv \eta_0$.
Working similarly to the derivation of \cite[Eq.~(6.2)]{delm:12}, 
but now with varying number of particles, we have that 
 for any $n\geq 0$
\begin{align*}
[\eta_{n}^{N_n} - \eta_n](\varphi) &= [\eta_n^{N_n} - \Phi_{n}(\eta_{n-1}^{N_{n-1}})](\varphi) +  [\Phi_{n} (\eta_{n-1}^{N_{n-1}}) - 
\eta_n](\varphi) \\[0.2cm]
 &= \frac{V_{n}^{N_n}(\varphi)}{\sqrt{N_{n}}} + \frac{\eta_{n-1}^{N_{n-1}}(G_{n-1}M_{n}(\varphi))}{\eta_{n-1}^{N_{n-1}}(G_{n-1})} - \frac{\eta_{n-1}(G_{n-1}M_{n}(\varphi))}{\eta_{n-1}(G_{n-1})}\\
 &= \frac{V_{n}^{N_n}(\varphi)}{\sqrt{N_n}} + R_n^{N_{n-1}}(D_{n-1,n}(\varphi)) + [\eta_{n-1}^{N_{n-1}} -\eta_{n-1}](D_{n-1,n}(\varphi))
\end{align*}
where notice that $D_{n-2,n-1}D_{n-1,n}=D_{n-2,n}$. Thus, working iteratively  
we have that
\begin{equation}
[\eta_n^{N_n}-\eta_n](\varphi) = \sum_{p=0}^n \frac{V_p^{N_p}(D_{p,n}(\varphi))}{\sqrt{N_p}} +  \sum_{p=0}^{n-1}R_{p+1}^{N_p}(D_{p,n}(\varphi)).
\label{eq:delmoral_decomp}
\end{equation}
Throughout this Section $C$ is a constant whose value may change, but does not depend on any time parameters of the Feynman-Kac formula, nor $(N_0,\dots,N_{L-1})$.

Now 
a technical Lemma is introduced, which will contain results that are frequently used in the below calculations.

\begin{lem}\label{lem:tech_lem_imp}
Assume (A\ref{hyp:A}-\ref{hyp:B}). There exist $C<+\infty$, $\kappa\in(0,1)$
such that for any $n\geq p\geq 0$, $q\geq s\geq 0$, $1\leq r <+\infty$
and $\varphi_n,\varphi_q\in\mathcal{B}_b(E)$: 
\begin{enumerate}[i)]
\item  \hspace{0.2cm} $\|D_{p,n}(\varphi_n)\|_{\infty}\leq C\kappa^{n-p}\|\varphi_n\|_{\infty} $.   \label{lem:tech_lem1}
\item \hspace{0.2cm} $\|V_p^{N_p}(D_{p,n}(\varphi_n))\|_r \leq C \kappa^{n-p}\|\varphi_n\|_{\infty}$. 
  \label{lem:tech_lem2}
\item \hspace{0.2cm} $
\|R_{p+1}^{N_p}(D_{p,n}(\varphi_n))\|_r \leq \frac{C \kappa^{n-p}\|\varphi_n\|_{\infty}}{N_p}
$.   \label{lem:tech_lem3}
\item\hspace{0.2cm}
$
\|V_p^{N_p}(D_{p,n}(\varphi_n))V_s^{N_s}(D_{s,q}(\varphi_q))\|_1 \leq C \kappa^{n-p+q-s}\|\varphi_n\|_{\infty}\|\varphi_q\|_{\infty}
$.  \label{lem:tech_lem4}
\item \hspace{0.2cm} $
\|V_p^{N_p}(D_{p,n}(\varphi_n))R_{s+1}^{N_s}(D_{s+1,q}(\varphi_q))\|_1 \leq \frac{C \kappa^{n-p+q-s}\|\varphi_n\|_{\infty}\|\varphi_q\|_{\infty}}{N_s}$.  \label{lem:tech_lem5}
\item \hspace{0.2cm} 
$
\|R_{p+1}^{N_p}(D_{p+1,n}(\varphi_n))R_{s+1}^{N_s}(D_{s+1,q}(\varphi_q))\|_1 \leq \frac{C \kappa^{n-p+q-s}\|\varphi_n\|_{\infty}\|\varphi_q\|_{\infty}}{N_p N_s}
$. \label{lem:tech_lem6}
\end{enumerate}
\end{lem}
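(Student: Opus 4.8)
The plan is to prove the six estimates in the listed order, since parts (ii)--(vi) all rest on part (i) and on a conditional moment inequality. The heart of the matter is part (i), the geometric ergodicity of the normalised Feynman--Kac semigroup. I would factor $D_{p,n}(\varphi_n)(u) = \tfrac{Q_{p,n}(1)(u)}{\eta_p(Q_{p,n}(1))}\,\big(\bar Q_{p,n}(\varphi_n)(u)-\eta_n(\varphi_n)\big)$, where $\bar Q_{p,n}$ is the Markov kernel obtained by normalising $Q_{p,n}$. Assumption (A\ref{hyp:A}) (upper and lower bounds on the $G_l$) together with the minorisation (A\ref{hyp:B}) of the $M_l$ gives a two-sided bound on $Q_{p,n}(1)(u)/Q_{p,n}(1)(v)$ that is uniform in $n-p$ (the minorisation is what prevents this ratio from degenerating as $n-p$ grows), hence a uniform bound on the first factor; for the second factor one writes $\eta_n(\varphi_n)$ as the average of $\bar Q_{p,n}(\varphi_n)$ against the probability measure $\mu_{p,n}(dv)\propto Q_{p,n}(1)(v)\,\eta_p(dv)$ and invokes the classical Dobrushin-coefficient estimate $\mathrm{osc}\big(\bar Q_{p,n}(\varphi_n)\big)\le C\kappa^{n-p}\,\mathrm{osc}(\varphi_n)$. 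This is the argument underlying \cite{delm:04}, and it yields $\|D_{p,n}(\varphi_n)\|_\infty\le C\kappa^{n-p}\|\varphi_n\|_\infty$.

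Part (ii) then follows by combining part (i) with the conditional structure of the sampler: conditional on the generation-$(p-1)$ particles (filtration $\mathcal{F}_{p-1}$), the generation-$p$ particles $U_p^1,\dots,U_p^{N_p}$ are i.i.d.\ from $\Phi_p(\eta_{p-1}^{N_{p-1}})$, so $V_p^{N_p}(\psi)=N_p^{-1/2}\sum_{i}\big(\psi(U_p^i)-\Phi_p(\eta_{p-1}^{N_{p-1}})(\psi)\big)$ is a normalised sum of conditionally independent, conditionally centred variables bounded by $2\|\psi\|_\infty$; a conditional Marcinkiewicz--Zygmund (or Burkholder) inequality gives $\mathbb{E}\big[|V_p^{N_p}(\psi)|^r\mid\mathcal{F}_{p-1}\big]\le C_r\|\psi\|_\infty^r$, hence $\|V_p^{N_p}(\psi)\|_r\le C_r\|\psi\|_\infty$, and one takes $\psi=D_{p,n}(\varphi_n)$. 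Before treating part (iii) I would record the auxiliary bound $\|[\eta_m^{N_m}-\eta_m](\psi)\|_r\le C\|\psi\|_\infty\sum_{p=0}^{m}\kappa^{m-p}N_p^{-1/2}\le C'\|\psi\|_\infty N_m^{-1/2}$ (the last step using the ordering $N_0\ge\cdots\ge N_m$), proved by induction on $m$: the base case $m=0$ is plain i.i.d.\ sampling, and the inductive step substitutes part (ii) at generations $\le m$ and part (iii) at generations $<m$ into the decomposition \eqref{eq:delmoral_decomp}, using $D_{p,m}D_{m,n}=D_{p,n}$.

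For part (iii), since $\eta_p Q_{p,n}$ is proportional to $\eta_n$, the function $D_{p,n}(\varphi_n)$ is $\eta_p$-centred, so $\eta_p^{N_p}(D_{p,n}(\varphi_n))=[\eta_p^{N_p}-\eta_p](D_{p,n}(\varphi_n))$ is $O\big(\kappa^{n-p}N_p^{-1/2}\big)$ in $L_{2r}$ by the auxiliary bound and part (i); moreover $\|[\eta_p-\eta_p^{N_p}](G_p)\|_{2r}=O(N_p^{-1/2})$ and $\eta_p^{N_p}(G_p)\ge\underline{C}$ by (A\ref{hyp:A}); writing out \eqref{eq:r_def} and applying the Cauchy--Schwarz inequality to the product then gives $\|R_{p+1}^{N_p}(D_{p,n}(\varphi_n))\|_r\le C\kappa^{n-p}\|\varphi_n\|_\infty/N_p$. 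Finally, parts (iv)--(vi) come from the Cauchy--Schwarz inequality applied to the $L_1$-norm of a product, reducing it to products of the $L_2$-bounds in parts (ii)--(iii); the arguments $D_{s+1,q}(\varphi_q)$ and $D_{p+1,n}(\varphi_n)$ differ from $D_{s,q}$, $D_{p,n}$ by one generation, which by part (i) costs only a harmless factor $\kappa^{-1}$ absorbed into $C$, and to obtain the sharp powers $N_p^{-1}$, $N_s^{-1}$ in parts (v)--(vi) one also invokes $\mathbb{E}[V_p^{N_p}(\cdot)\mid\mathcal{F}_{p-1}]=0$ to discard the mixed terms in which a later-generation local error multiplies a factor measurable with respect to an earlier generation.

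I expect the genuine difficulty to lie in part (i): making precise that (A\ref{hyp:A}-\ref{hyp:B}) deliver a contraction rate $\kappa\in(0,1)$ uniform over all time indices — in particular controlling the normalising ratio $Q_{p,n}(1)(u)/\eta_p(Q_{p,n}(1))$ — and, as a secondary bookkeeping matter, carrying the induction behind the auxiliary $L_r$-bound and parts (iii)--(vi) while keeping $C$ independent of $(N_0,\dots,N_{L-1})$ and of the horizon. The remaining ingredients — conditional Marcinkiewicz--Zygmund, Cauchy--Schwarz, and the conditional-mean-zero cancellations — are routine once part (i) is in place.
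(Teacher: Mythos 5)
Your overall route coincides with the paper's: part (i) via the Dobrushin/semigroup contraction (the paper refers to the standard Feynman--Kac calculations, e.g.\ Proposition 2 of \cite{mart:13}), part (ii) via the conditional Marcinkiewicz--Zygmund bound (this is exactly \cite[Lemma 7.3.3]{delm:04}), part (iii) via $\eta_p(D_{p,n}(\varphi_n))=0$, Cauchy--Schwarz and an $\mathcal{O}(N_p^{-1/2})$ bound in $L_{2r}$ on the empirical errors, and parts (iv)--(vi) by Cauchy--Schwarz on the product. The one place you genuinely depart from the paper is where it simply invokes the time-uniform $L_r$ estimate \cite[Theorem 7.4.4]{delm:04}, $\|[\eta_p^{N_p}-\eta_p](\psi)\|_{2r}\le C\|\psi\|_{\infty}N_p^{-1/2}$: you propose to re-derive this auxiliary bound by induction on $m$ through the decomposition \eqref{eq:delmoral_decomp}, feeding parts (ii) and (iii) back in. That induction has a real problem, not a bookkeeping one: part (iii) at level $p$ consumes the auxiliary bound at level $p$ (twice, through Cauchy--Schwarz, and in the higher norm $L_{2r}$), while the auxiliary bound at level $m$ consumes part (iii) at every $p<m$; the resulting recursion for the constants is of the form $A_m\le C+c\max_{p<m}A_p^{2}$ (or, if you bound $\eta_p^{N_p}(D_{p,m}\psi)$ by its sup-norm instead, $A_m\le C+c\sum_{p<m}\kappa^{m-p}A_p$), and neither stays bounded as the horizon grows without an additional smallness condition that (A\ref{hyp:A}--\ref{hyp:B}) do not provide. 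Since the lemma requires $C$ independent of all time parameters --- and this uniformity is precisely what makes Theorem \ref{theo:main_error} useful as $L\to\infty$ --- your self-contained substitute does not deliver the statement as it stands. The fix is either to quote the time-uniform result, as the paper does, or to prove it by the standard argument that telescopes over the nonlinear transport maps $\Phi_{p,n}$, so that each summand is first order in a single local sampling error carried by the exponentially stable semigroup and no compounding of constants occurs.

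Two smaller remarks. In (v)--(vi) you treat the shifted arguments $D_{s+1,q}$, $D_{p+1,n}$ as costing only a factor $\kappa^{-1}$; note, however, that the $1/N_s$ rate in (iii) rests on the argument being $\eta_s$-centred, which holds for $D_{s,q}$ but not for $D_{s+1,q}$, so the rate does not literally transfer. The paper's one-line proof and its subsequent application (in Proposition \ref{prop:prop_corr_bd3} the arguments appear as $D_{s,q}$, $D_{p,n}$) indicate that the matched-index version is the intended one, and that version your argument does cover, so this is a shared inconsistency rather than a defect specific to your write-up. Finally, the conditional-mean-zero cancellation you invoke for (v)--(vi) is unnecessary: plain Cauchy--Schwarz with the $L_2$ bounds from (ii)--(iii) already yields the stated powers of $N_p$ and $N_s$.
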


\begin{proof}
For \eqref{lem:tech_lem1}. This follows from standard calculations in the analysis of Feynman-Kac formulae; see e.g.\@ the proof of Proposition 2 in \cite{mart:13}.
For \eqref{lem:tech_lem2}.This follows from \cite[Lemma 7.3.3]{delm:04} and 
\eqref{lem:tech_lem1}.
For \eqref{lem:tech_lem3}. Recall \eqref{eq:r_def} and note that $\eta_p(D_{p,n}(\varphi_n))=0$; then on application of Cauchy-Schwarz and assumption (A\ref{hyp:A}) one has that 
$$
\|R_{p+1}^{N_p}(D_{p,n}(\varphi_n))\|_r \leq C\, \|\eta_p^{N_p}(D_{p,n}(\varphi_n))\|_{2r}\cdot \|\eta_p(G_p)-\eta_p^{N_p}(G_p)\|_{2r}\ .
$$
The result follows from \cite[Theorem 7.4.4]{delm:04} and \eqref{lem:tech_lem1}. 
\eqref{lem:tech_lem4} follows from Cauchy-Schwarz and \eqref{lem:tech_lem2}.  \eqref{lem:tech_lem5} follows from Cauchy-Schwarz, \eqref{lem:tech_lem2} and \eqref{lem:tech_lem3}. 
\eqref{lem:tech_lem6} follows from Cauchy-Schwarz and \eqref{lem:tech_lem3}.
\end{proof}

Recall equations \eqref{eq:ay} and \eqref{eq:aybar}, and define the following terms,
\begin{align*}
\mathcal{V}_{n}(\varphi, {N}) = \sum_{p=0}^n 
\frac{V_p^{N_p}(D_{p,n}(\varphi))}{\sqrt{N_p}} \ ; \quad 
\mathcal{R}_{n}(\varphi, {N}) = 
\sum_{p=0}^{n-1}R_{p+1}^{N_p}(D_{p,n}(\varphi))\ .
\end{align*}
Here we use a slight abuse of notation for $N$, representing $(N_0,\dots,N_n)$ (or $(N_0,\dots,N_{n-1})$).

\begin{lem}\label{lem:tech_lem_res1.5}
Assume (A\ref{hyp:A}-\ref{hyp:B}). There exist a $C<+\infty$, $\kappa\in(0,1)$ such that for any $n> q \geq 0$ and $\varphi_n,\varphi_q,g\in\mathcal{B}_b(E)$, $\|g\|_{\infty}=1$:
\end{lem}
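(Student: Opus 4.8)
The plan is to reduce the asserted bound to the single-index estimates of Lemma~\ref{lem:tech_lem_imp} by inserting the Feynman--Kac decomposition \eqref{eq:delmoral_decomp} into each SMC fluctuation that appears, and then to exploit the martingale-difference structure of the $V$-terms to discard the bulk of the resulting cross-contributions. Since (up to bounded scalar multipliers, discussed below) the object to be controlled is the expectation of a product of a fluctuation at the later level $n$ and one at the earlier level $q$, I would first write
\[
[\eta_n^{N_n}-\eta_n](\varphi_n)=\mathcal{V}_n(\varphi_n,N)+\mathcal{R}_n(\varphi_n,N),\qquad
[\eta_q^{N_q}-\eta_q](\varphi_q)=\mathcal{V}_q(\varphi_q,N)+\mathcal{R}_q(\varphi_q,N),
\]
and expand the product into the four pieces $\mathcal{V}_n\mathcal{V}_q$, $\mathcal{V}_n\mathcal{R}_q$, $\mathcal{R}_n\mathcal{V}_q$, $\mathcal{R}_n\mathcal{R}_q$. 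The random scalar factors arising through \eqref{eq:basic} (the ratios $A_{\cdot}(g,N)$, which satisfy $|A_{\cdot}(g,N)|\le\|g\|_\infty=1$ almost surely) should be handled by writing $A_{\cdot}(g,N)=\eta_{\cdot}(g)+\overline{A}_{\cdot}(g,N)$ as in \eqref{eq:aybar}: the deterministic part $\eta_{\cdot}(g)$ passes through harmlessly, while $\overline{A}_{\cdot}(g,N)$ can itself be re-expressed, using that $\eta^{N}(G)$ is bounded away from $0$ and $\infty$ by (A\ref{hyp:A}), as a further $[\eta^{N}-\eta]$-fluctuation at the corresponding level, hence again decomposable via \eqref{eq:delmoral_decomp}. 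This brings everything to double sums, indexed by $0\le p\le n$ and $0\le s\le q$, of products of objects of the form $V_p^{N_p}(D_{p,\cdot}(\cdot))/\sqrt{N_p}$ and $R_{p+1}^{N_p}(D_{p,\cdot}(\cdot))$.

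The key point is the martingale-difference identity: writing $\mathcal{F}_m$ for the $\sigma$-algebra generated by the particle systems at levels $0,\dots,m$, one has $\mathbb{E}[V_s^{N_s}(\psi)\mid\mathcal{F}_{s-1}]=0$, whereas $V_p^{N_p}(\cdot)$ and $R_{p+1}^{N_p}(\cdot)$ are $\mathcal{F}_p$-measurable. Consequently, in the expectation of $\mathcal{V}_n(\varphi_n,N)\,\mathcal{V}_q(\varphi_q,N)$ only the diagonal indices $p=s$ survive, and in the mixed $\mathcal{V}$--$\mathcal{R}$ sums only those indices in which the $V$-index does not exceed the $R$-index contribute. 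On the surviving terms I would apply Lemma~\ref{lem:tech_lem_imp}\,\eqref{lem:tech_lem4}--\eqref{lem:tech_lem6}, which supply factors $C\,\kappa^{(n-p)+(q-s)}\|\varphi_n\|_\infty\|\varphi_q\|_\infty$ together with the appropriate inverse powers of $N_p$ and $N_s$, on top of the explicit $N_p^{-1/2}$, $N_s^{-1/2}$ prefactors carried by $\mathcal{V}$.

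It then remains to sum the geometric series, using throughout the monotonicity $N_0\ge N_1\ge\cdots\ge N_{L-1}$. In the diagonal $\mathcal{V}$--$\mathcal{V}$ sum one factorises $\kappa^{(n-p)+(q-p)}=\kappa^{n-q}\kappa^{2(q-p)}$ and bounds $N_p^{-1}\le N_q^{-1}$ for $p\le q$, producing exactly the term $\kappa^{n-q}/N_q$. For the terms carrying a factor sitting at the later level $n$, one splits the sum over that level's index: the range $p\le q$ again factors out $\kappa^{n-q}$ and absorbs its denominator into $N_q^{-1}$, giving a contribution dominated by $\kappa^{n-q}/N_q$; the range $q<p\le n$ leaves a convergent residual sum $\sum_{q<p\le n}\kappa^{n-p}$ and, via $N_p^{-1}\le N_n^{-1}$ together with the leftover $N_q^{-1/2}$ from the earlier level's $\mathcal{V}$, yields the second term $N_q^{-1/2}N_n^{-1}$; the purely-$\mathcal{R}$ piece is of strictly smaller order. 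Adding the four contributions gives the claimed estimate. The step I expect to be the main obstacle is precisely the bookkeeping around the random scalars $A_{\cdot}(g,N)$, $\overline{A}_{\cdot}(g,N)$ and the normalisers $\eta^{N}(G)^{-1}$ --- ensuring that the martingale cancellation is not spoiled by these, and that the index-splitting is carried out so that no surviving term degrades to a weaker $N_n^{-1/2}N_q^{-1/2}$ bound in place of the two displayed terms.
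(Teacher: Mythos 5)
Your proposal is correct and takes essentially the same route as the paper: decompose each fluctuation via \eqref{eq:delmoral_decomp} into $\mathcal{V}+\mathcal{R}$, use the martingale property $\mathbb{E}[V_p^{N_p}(\cdot)\mid\mathcal{F}^N_q]=0$ for $p>q$ to discard the level-$n$ $\mathcal{V}$-contributions beyond level $q$, bound the surviving double sums with Lemma \ref{lem:tech_lem_imp}\,\eqref{lem:tech_lem4}--\eqref{lem:tech_lem6}, and sum the geometric series using $N_0\ge N_1\ge\cdots$, which yields exactly the terms $\kappa^{n-q}/N_q$ (from the indices $p,s\le q$) and $1/(N_q^{1/2}N_n)$ (from the $\mathcal{R}_n$-terms near level $n$). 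Your additional splitting $A_q(g,N_q)=\eta_q(gG_q)/\eta_q(G_q)+\overline{A}_q(g,N_q)$ is harmless but unnecessary at this stage, since $A_q(g,N_q)$ is $\mathcal{F}^N_q$-measurable and $|A_q(g,N_q)|\le 1$ suffices after the conditioning step; the paper reserves that splitting (together with Lemma \ref{lem:tech_lem_res2}) for the level-$n$ multiplier in Proposition \ref{prop:prop_corr_bd3}, where it is genuinely needed.
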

\begin{itemize}
\item[i)]  \hspace{0.2cm}  $\big|\, \mathbb{E}\,[\,A_q(g,N_q)\mathcal{V}_{n}(\varphi_n,{N})\mathcal{V}_{q}(\varphi_q,{N})\,]\,\big|   \leq \frac{C\|\varphi_n\|_{\infty}\|\varphi_q\|_{\infty} \kappa^{n-q}}{N_q}$.   
\item[ii)]  \hspace{0.2cm} $ \big|\, \mathbb{E}\,[\,A_q(g,N_q)\mathcal{V}_{n}(\varphi_n,{N})\mathcal{R}_{q}(\varphi_q,{N})\,]\,\big|  \leq \frac{C\|\varphi_n\|_{\infty}\|\varphi_q\|_{\infty} \kappa^{n-q}}{N_q^{3/2}} $.  
\item[iii)]  \hspace{0.2cm} $ \big|\,\mathbb{E}\,[\,A_q(g,N_q)\mathcal{V}_{q}(\varphi_q,{N})\mathcal{R}_{n}(\varphi_n,{N})\,]\, \big| \leq \frac{C\|\varphi_n\|_{\infty}\|\varphi_q\|_{\infty}}{\sqrt{N_q}N_n} $.  
\item[iv)]  \hspace{0.2cm} $ \big|\, \mathbb{E}\,[\,A_q(g,N_q)\mathcal{R}_{q}(\varphi_q,{N})\mathcal{R}_{n}(\varphi_n,{N})\,]\,\big|  \leq \frac{C\|\varphi_n\|_{\infty}\|\varphi_q\|_{\infty}}{N_n N_q} $.
\end{itemize}
\begin{proof}
Set $\mathcal{F}^{{N}}_{q }$ as the $\sigma$ algebra generated by particle system up to time $q$.

i)\,We start by noting that $\mathbb{E}\,[\,V_p^{N_p}(D_{p,n}(\varphi_n))\mid\mathcal{F}^{{N}}_{q }\,]=0$ for any $q< p \leq n$, so that:
\begin{equation*}
\mathbb{E}\,[\,A_q(g,N_q)\mathcal{V}_{n}(\varphi_n,N)\mathcal{V}_{q}(\varphi_q,N)\,] = \sum_{0\le p,s\le q} \mathbb{E}\,\big[\,\tfrac{1}{\sqrt{N_p N_s}}\,A_q(g,N_q)V_p^{N_p}(D_{p,n}(\varphi_n))V_s^{N_s}(D_{s,q}(\varphi_q))\,\big]\ .
\end{equation*}
As $|A_q(g,N_q)|\leq 1$, one can use Lemma \ref{lem:tech_lem_imp} \eqref{lem:tech_lem4}, to obtain the upper bound 
\begin{align*}
 \sum_{0\le p,s\le q} \mathbb{E}\,\big|\,\tfrac{1}{\sqrt{N_p N_s}}\,A_q(g,N_q)&V_p^{N_p}(D_{p,n}(\varphi_n))V_s^{N_s}(D_{s,q}(\varphi_q))\,\big| \leq  C\|\varphi_n\|_{\infty}\|\varphi_q\|_{\infty}\sum_{0\le p,s\le q}  \tfrac{\kappa^{n-p+q-s}}{\sqrt{N_p N_s}}\\
&\leq \tfrac{C\|\varphi_n\|_{\infty}\|\varphi_q\|_{\infty} \kappa^{n-q}}{N_q}\ .
\end{align*}

\noindent ii)\,\, Again using  $\mathbb{E}\,[\,V_p^{N_p}(D_{p,n}(\varphi_n))\mid \mathcal{F}^N_{q}\,]=0$ for $q< p \leq n$, we have
\begin{equation*}
 \mathbb{E}\,[\,A_q(g,N_q)\mathcal{V}_{n}(\varphi_n,N)\mathcal{R}_{q}(\varphi_q,N)\,]
= \sum_{0\le p,s\le q} \mathbb{E}\,\big[\,\tfrac{1}{\sqrt{N_p}}\,A_q(g,N_q)V_p^{N_p}(D_{p,n}(\varphi_n))R_{s+1}^{N_s}(D_{s,q}(\varphi_q))\,\big]\ .
\end{equation*}
By $|A_q(g,N_q)|\leq 1$ and Lemma \ref{lem:tech_lem_imp}\eqref{lem:tech_lem5} we have the upper-bound
\begin{align*}
 \sum_{0\le p,s\le q} \mathbb{E}\,\big|\,\tfrac{1}{\sqrt{N_p}}A_q(g,N)&V_p^{N_p}(D_{p,n}(\varphi_n))R_{s+1}^{N_s}(D_{s,q}(\varphi_q))\,\big| 
\leq   C\|\varphi_n\|_{\infty}\|\varphi_q\|_{\infty} \sum_{0\le p,s\le q}  \tfrac{\kappa^{n-p+q-s}}{\sqrt{N_p}\,{N_s} }\\
&\leq \tfrac{C\|\varphi_n\|_{\infty}\|\varphi_q\|_{\infty} \kappa^{n-q}}{N_q^{3/2}}\ .
\end{align*}

\noindent iii)\,\, 
By $|A_q(g,N_q)|\leq 1$ and Lemma \ref{lem:tech_lem_imp}\eqref{lem:tech_lem5}, we have the upper bound
\begin{align*}
\big|\, \mathbb{E}\,[\,A_q(g,N_q)&\mathcal{V}_{q}(\varphi_q,N)\mathcal{R}_{n}(\varphi_n,N)\,]\,\big|  \le  
\sum_{p=0}^{n-1} \sum_{s=0}^q\mathbb{E}\,\big|\,\tfrac{1}{\sqrt{N_s}}A_q(g,N_q)R_{p+1}^N(D_{p,n}(\varphi_n))V_s^N(D_{s,q}(\varphi_q))\,\big|\\ &\leq  C\|\varphi_n\|_{\infty}\|\varphi_q\|_{\infty} 
\sum_{p=0}^{n-1}\sum_{s=0}^{q} \tfrac{ \kappa^{n-p+q-s} }{\sqrt{N_s}N_p}
\leq \tfrac{C\|\varphi_n\|_{\infty}\|\varphi_q\|_{\infty}}{\sqrt{N_q}N_n}\ .
\end{align*}

\noindent iv)\,\,  Working as in (iii), by 
 $|A_q(g,N_q)|\leq 1$ and Lemma \ref{lem:tech_lem_imp}\eqref{lem:tech_lem6}, we have 
\begin{align*}
\big|\, \mathbb{E}\,[\,A_q(g,N_q)&\mathcal{R}_{q}(\varphi_q,N)\mathcal{R}_{n}(\varphi_n,N)\,]\,\big| \le 
\sum_{p=0}^{n-1}\sum_{s=0}^{q-1}\mathbb{E}\,\Big|\,A_q(g,N_q)R_{p+1}^N(D_{p,n}(\varphi_n))R_{s+1}^N(D_{s,q}(\varphi_q))\,\Big| \\ &\leq   C\|\varphi_n\|_{\infty}\|\varphi_q\|_{\infty} 
\sum_{p=0}^{n-1}\sum_{s=0}^{q} \tfrac{\kappa^{n-p+q-s}}{N_p N_s}
\leq \tfrac{C\|\varphi_n\|_{\infty}\|\varphi_q\|_{\infty}}{N_n N_q}\ .
\end{align*}
\end{proof}

\begin{lem}\label{lem:tech_lem_res2}
Assume (A\ref{hyp:A}-\ref{hyp:B}). There exists a $C<+\infty$, such that for any $n\geq 0$, $1\leq r <+\infty$ and $\varphi\in\mathcal{B}_b(E)$:
$$
\|\overline{A}_n(\varphi,N_n)\|_r \leq \tfrac{C\|\varphi\|_{\infty}}{\sqrt{N_n}}.
$$
\end{lem}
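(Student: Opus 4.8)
The plan is to recall the definition of $\overline{A}_n(\varphi,N_n)$ from \eqref{eq:aybar}, namely $\overline{A}_n(\varphi,N_n) = \eta_n^{N_n}(\varphi G_n)/\eta_n^{N_n}(G_n) - \eta_n(\varphi G_n)/\eta_n(G_n)$, and to decompose this difference of ratios into pieces that are controlled by the fluctuation estimates already established. Writing $a = \eta_n^{N_n}(\varphi G_n)$, $b = \eta_n^{N_n}(G_n)$, $\alpha = \eta_n(\varphi G_n)$, $\beta = \eta_n(G_n)$, one has the algebraic identity
\begin{equation*}
\frac{a}{b} - \frac{\alpha}{\beta} = \frac{a-\alpha}{\beta} + \frac{a}{b}\cdot\frac{\beta - b}{\beta}\ .
\end{equation*}
Under (A\ref{hyp:A}) the denominator $\beta = \eta_n(G_n) \geq \underline{C} > 0$ is bounded below uniformly in $n$, and $|a/b| = |A_n(\varphi,N_n)| \leq \|\varphi\|_\infty$ with probability one, as noted in the text after \eqref{eq:aybar}. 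Hence, taking $L_r$-norms and using the triangle inequality,
\begin{equation*}
\|\overline{A}_n(\varphi,N_n)\|_r \leq \frac{1}{\underline{C}}\,\big\|[\eta_n^{N_n} - \eta_n](\varphi G_n)\big\|_r + \frac{\|\varphi\|_\infty}{\underline{C}}\,\big\|[\eta_n^{N_n}-\eta_n](G_n)\big\|_r\ .
\end{equation*}

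It then remains to bound each of the two $L_r$-norms of the form $\|[\eta_n^{N_n}-\eta_n](\psi)\|_r$ by $C\|\psi\|_\infty/\sqrt{N_n}$, with $\psi = \varphi G_n$ and $\psi = G_n$ respectively; since $\|G_n\|_\infty \leq \overline{C}$ by (A\ref{hyp:A}), this gives $\|\varphi G_n\|_\infty \leq \overline{C}\|\varphi\|_\infty$ and $\|G_n\|_\infty \leq \overline{C}$, so the final bound is of the claimed order $C\|\varphi\|_\infty/\sqrt{N_n}$. For the fluctuation bound itself I would invoke the decomposition \eqref{eq:delmoral_decomp}, which expresses $[\eta_n^{N_n}-\eta_n](\psi) = \mathcal{V}_n(\psi,N) + \mathcal{R}_n(\psi,N)$, and then apply Minkowski's inequality together with Lemma \ref{lem:tech_lem_imp}\eqref{lem:tech_lem2} and \eqref{lem:tech_lem3}: summing $\sum_{p=0}^n \|V_p^{N_p}(D_{p,n}(\psi))\|_r/\sqrt{N_p} \leq C\|\psi\|_\infty \sum_{p=0}^n \kappa^{n-p}/\sqrt{N_p}$ and $\sum_{p=0}^{n-1}\|R_{p+1}^{N_p}(D_{p,n}(\psi))\|_r \leq C\|\psi\|_\infty\sum_{p=0}^{n-1}\kappa^{n-p}/N_p$. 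Using $N_p \geq N_n$ for $p \leq n$ (the monotonicity $N_0 \geq N_1 \geq \cdots$) both geometric sums are bounded by $C\|\psi\|_\infty/\sqrt{N_n}$; alternatively one may simply cite \cite[Theorem 7.4.4]{delm:04}, which was already used in Lemma \ref{lem:tech_lem} for exactly this type of estimate, being slightly careful that the statement there is for fixed $N$ while here the particle numbers vary with level.

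The only genuine subtlety, and the step I would treat most carefully, is the handling of the varying particle numbers $(N_0,\dots,N_n)$: the classical reference estimates in \cite{delm:04} are stated for a constant population size, so one must either re-derive the $1/\sqrt{N_n}$ rate through \eqref{eq:delmoral_decomp} and Lemma \ref{lem:tech_lem_imp} (which is the self-contained route and exploits $\min_{0\le p\le n} N_p = N_n$), or appeal to the remark in the text that the analysis extends routinely to the varying-$N$ setting. Everything else — the ratio decomposition, the uniform lower bound on $\eta_n(G_n)$, the almost-sure bound $|A_n(\varphi,N_n)|\le\|\varphi\|_\infty$, and the boundedness of $G_n$ — is immediate from (A\ref{hyp:A}) and the observations already recorded, so no real obstacle remains beyond bookkeeping.
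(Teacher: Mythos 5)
Your proposal is correct and follows essentially the same route as the paper: a difference-of-ratios decomposition (yours keeps $\eta_n(G_n)$ in the first denominator and the a.s.\ bound $|A_n(\varphi,N_n)|\le\|\varphi\|_\infty$, while the paper keeps the empirical $\eta_n^{N_n}(G_n)$ and bounds $\eta_n(\varphi G_n)/\{\eta_n^{N_n}(G_n)\eta_n(G_n)\}$ via (A\ref{hyp:A}), a purely cosmetic difference), followed by Minkowski and the $L_r$ fluctuation bound $\|[\eta_n^{N_n}-\eta_n](\psi)\|_r\le C\|\psi\|_\infty/\sqrt{N_n}$ from \cite[Theorem 7.4.4]{delm:04}. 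Your optional self-contained derivation of that fluctuation bound through \eqref{eq:delmoral_decomp}, Lemma \ref{lem:tech_lem_imp} and the monotonicity $N_p\ge N_n$ is a sound way to handle the varying-$N$ caveat, but it is not needed beyond what the paper already does.
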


\begin{proof}
The result is standard, but we give the proof for completeness. We have
$$
\|\overline{A}_n(\varphi,N_n)\|_r = \big\|\,\tfrac{\eta_n^{N_n}(\varphi G_n)-\eta_n(\varphi G_n)}{\eta_n^{N_n}(G_n)} + \tfrac{\eta_n(\varphi G_n)}{\eta_n^{N_n}(G_n)\eta_n(G_n)}[\eta_n(G_n)-\eta_n^{N_n}(G_n)]\, \big\|_r\ .
$$
Application of Minkowski, (A\ref{hyp:A}) and \cite[Theorem 7.4.4]{delm:04} complete the proof.
\end{proof}

\begin{prop}\label{prop:prop_corr_bd3}
Assume (A\ref{hyp:A}-\ref{hyp:B}). There exist a $C<+\infty$, $\kappa\in(0,1)$ such that for any $n> q \geq 0$ and $\varphi_n,\varphi_q,f,g\in\mathcal{B}_b(E)$, $\|g\|_{\infty} = \|f\|_{\infty} =1$:
\begin{equation*}
\Big|\,\mathbb{E}\,\big[\,A_n(g,N_n)A_q(f,N_q)(\eta_n^{N_n}-\eta_n)(\varphi_n)(\eta_q^{N_q}-\eta_q)(\varphi_q)\,\big]\,\Big| \leq C \|\varphi_n\|_{\infty}\|\varphi_q\|_{\infty} \big(\tfrac{\kappa^{n-q}}{N_q} + \tfrac{1}{N_q^{1/2}N_n}\big)
\end{equation*}
\end{prop}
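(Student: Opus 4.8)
The strategy is to expand both occurrences of $(\eta_m^{N_m}-\eta_m)(\varphi_m)$, for $m\in\{n,q\}$, using the telescoping decomposition \eqref{eq:delmoral_decomp}, writing
\[
(\eta_n^{N_n}-\eta_n)(\varphi_n) = \mathcal{V}_n(\varphi_n,N) + \mathcal{R}_n(\varphi_n,N), \qquad
(\eta_q^{N_q}-\eta_q)(\varphi_q) = \mathcal{V}_q(\varphi_q,N) + \mathcal{R}_q(\varphi_q,N).
\]
The product inside the expectation then breaks into four cross terms of the form $A_n(g,N_n)A_q(f,N_q)$ times one of $\mathcal{V}_n\mathcal{V}_q$, $\mathcal{V}_n\mathcal{R}_q$, $\mathcal{R}_n\mathcal{V}_q$, $\mathcal{R}_n\mathcal{R}_q$. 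Since $|A_n(g,N_n)|\le 1$ almost surely (it is a conditional $L_1$-norm of $g$ with $\|g\|_\infty=1$), the factor $A_n(g,N_n)$ can simply be absorbed into the bound by the triangle inequality, after which each term has exactly the shape already estimated in Lemma \ref{lem:tech_lem_res1.5}(i)--(iv) with $A_q(g,N_q)$ replaced by $A_q(f,N_q)$ (the proof there only used $|A_q|\le 1$, never any property of $g$, so it applies verbatim with $f$).

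\textbf{Key steps.} First I would record the four-term expansion and note the uniform bound $|A_n(g,N_n)A_q(f,N_q)|\le 1$ a.s., so that
\[
\Big|\mathbb{E}\big[A_n(g,N_n)A_q(f,N_q)(\eta_n^{N_n}-\eta_n)(\varphi_n)(\eta_q^{N_q}-\eta_q)(\varphi_q)\big]\Big|
\le \sum_{\mathrm{4\ terms}} \mathbb{E}\big[\,|A_q(f,N_q)|\,\big|\text{(product of two }\mathcal{V}/\mathcal{R}\text{ factors)}\big|\,\big].
\]
Second, I would apply Lemma \ref{lem:tech_lem_res1.5} term by term: part (i) handles $\mathcal{V}_n\mathcal{V}_q$, giving $C\|\varphi_n\|_\infty\|\varphi_q\|_\infty\kappa^{n-q}/N_q$; part (ii) handles $\mathcal{V}_n\mathcal{R}_q$, giving $C\|\varphi_n\|_\infty\|\varphi_q\|_\infty\kappa^{n-q}/N_q^{3/2}$; part (iii) handles $\mathcal{V}_q\mathcal{R}_n$, giving $C\|\varphi_n\|_\infty\|\varphi_q\|_\infty/(\sqrt{N_q}N_n)$; part (iv) handles $\mathcal{R}_n\mathcal{R}_q$, giving $C\|\varphi_n\|_\infty\|\varphi_q\|_\infty/(N_nN_q)$. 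Third, I would sum these four contributions and observe that, since $N_q\ge N_n\ge 1$, the terms $\kappa^{n-q}/N_q^{3/2}$ and $1/(N_nN_q)$ are each dominated (up to a constant) by $\kappa^{n-q}/N_q + 1/(N_q^{1/2}N_n)$ — indeed $\kappa^{n-q}/N_q^{3/2}\le \kappa^{n-q}/N_q$ and $1/(N_nN_q)\le 1/(N_q^{1/2}N_n)$ — so the whole sum collapses to $C\|\varphi_n\|_\infty\|\varphi_q\|_\infty\big(\kappa^{n-q}/N_q + 1/(N_q^{1/2}N_n)\big)$, which is the claimed bound.

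\textbf{Main obstacle.} Given Lemma \ref{lem:tech_lem_res1.5}, there is essentially no obstacle left at this level — the proposition is a bookkeeping corollary. The genuine work has already been done inside Lemma \ref{lem:tech_lem_res1.5}, which in turn rests on the conditional-unbiasedness identities $\mathbb{E}[V_p^{N_p}(D_{p,n}(\varphi_n))\mid\mathcal{F}_q^N]=0$ for $q<p\le n$ and on the contraction estimates of Lemma \ref{lem:tech_lem_imp}. The one point requiring a word of care is that Lemma \ref{lem:tech_lem_res1.5} is stated with the \emph{single} outer factor $A_q(g,N_q)$, whereas here we have a \emph{product} $A_n(g,N_n)A_q(f,N_q)$; the resolution is precisely that $|A_n(g,N_n)|\le 1$ a.s., so the extra factor is harmless and the conditioning argument on $\mathcal{F}_q^N$ goes through unchanged (the martingale-increment property of $V_p^{N_p}$ for $p>q$ is unaffected by multiplying by an $\mathcal{F}_n^N$-measurable bounded quantity, since one first takes the expectation of the product of the two $\mathcal{V}_n$/$\mathcal{R}_n$ factors, all of which are handled in $L_1$ via Cauchy--Schwarz exactly as in the lemma's proof).
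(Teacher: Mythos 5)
There is a genuine gap, and it sits exactly at the point you flag as "a word of care". Your plan is to absorb the factor $A_n(g,N_n)$ using $|A_n(g,N_n)|\le 1$ and then apply Lemma \ref{lem:tech_lem_res1.5} verbatim with $f$ in place of $g$. But the proof of Lemma \ref{lem:tech_lem_res1.5} does not merely use $|A_q|\le 1$: its first and decisive step is that $A_q(\cdot,N_q)$ and $\mathcal{V}_q,\mathcal{R}_q$ are $\mathcal{F}^N_q$-measurable, so that $\mathbb{E}[V_p^{N_p}(D_{p,n}(\varphi_n))\mid \mathcal{F}^N_{p-1}]=0$ kills every term with $q<p\le n$ in the expansion \eqref{eq:delmoral_decomp} of $(\eta_n^{N_n}-\eta_n)(\varphi_n)$; only then does the surviving sum over $0\le p,s\le q$ produce the factor $\kappa^{n-q}/N_q$. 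The extra factor $A_n(g,N_n)$ is $\mathcal{F}^N_n$-measurable but not $\mathcal{F}^N_{p-1}$-measurable for $p\le n$, so multiplying by it destroys precisely this cancellation: the terms with $p>q$ no longer vanish, and your fallback of "handling everything in $L_1$ via Cauchy--Schwarz" then yields, from the term $(p,s)=(n,q)$ alone, a contribution of order $1/\sqrt{N_nN_q}$ (Lemma \ref{lem:tech_lem_imp}\eqref{lem:tech_lem4} with $\kappa^{0}$), which is not dominated by $\kappa^{n-q}/N_q+1/(N_q^{1/2}N_n)$ — take $n-q$ large and $N_n\ll N_q$ to see the claimed bound fail. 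So the proposition is not a bookkeeping corollary of Lemma \ref{lem:tech_lem_res1.5}; some additional idea is needed to handle $A_n(g,N_n)$.

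The paper's proof supplies that idea by centering the troublesome factor: write $A_n(g,N_n)=\overline{A}_n(g,N_n)+\eta_n(gG_n)/\eta_n(G_n)$. The second piece is a deterministic constant bounded by $1$, so it comes out of the expectation and Lemma \ref{lem:tech_lem_res1.5} applies legitimately to the remaining expression (only the $\mathcal{F}^N_q$-measurable $A_q(f,N_q)$ is left inside), giving the $\kappa^{n-q}/N_q$-type terms. For the piece containing $\overline{A}_n(g,N_n)$ the martingale cancellation is indeed unavailable, but this is compensated by Lemma \ref{lem:tech_lem_res2}, $\|\overline{A}_n(g,N_n)\|_r\le C/\sqrt{N_n}$: expanding both empirical differences via \eqref{eq:delmoral_decomp} and applying H\"older with three factors, each of the four $\mathcal{V}/\mathcal{R}$ cross terms picks up an extra $N_n^{-1/2}$, e.g.\ the $\mathcal{V}_n\mathcal{V}_q$ term is bounded by $C\|\varphi_n\|_\infty\|\varphi_q\|_\infty N_n^{-1/2}\sum_{p\le n}\sum_{s\le q}\kappa^{n-p+q-s}/\sqrt{N_pN_s}\le C\|\varphi_n\|_\infty\|\varphi_q\|_\infty/(N_n\sqrt{N_q})$, which is within the claimed bound. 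Your outline of the final summation and the dominations $\kappa^{n-q}/N_q^{3/2}\le\kappa^{n-q}/N_q$ and $1/(N_nN_q)\le 1/(N_q^{1/2}N_n)$ is fine, but without the $\overline{A}_n$ decomposition the core estimate does not go through.
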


\begin{proof}
From the definition of $A_n$, $\overline{A}_n$ we have that
\begin{align*}
\mathbb{E}\,\big[\,A_n&(g,N_n)A_q(f,N_q)(\eta_n^{N_n}-\eta_n)(\varphi_n)(\eta_q^{N_q}-\eta_q)(\varphi_q)\,\big] =\\ 
 &\Delta_{1,q,n}(f,g,\varphi_{q},\varphi_{n}, N_q,N_n) + 
 \tfrac{\eta_n(g G_n)}{\eta_n(G_n)}\,\Delta_{2,q,n}(f,\varphi_{q},\varphi_{n}, N_q,N_n)
\end{align*}
where we have defined 
\begin{align*}
\Delta_{1,q,n}(f,g,\varphi_{q},\varphi_{n}, N_q, N_n) &= 
\mathbb{E}\,\big[\,\overline{A}_n(g,N_n)A_q(f,N_q)(\eta_n^{N_n}-\eta_n)(\varphi_n)(\eta_q^{N_q}-\eta_q)(\varphi_q)\,\big]\ ,\\
\Delta_{2,q,n}(f,\varphi_{q},\varphi_{n}, N_q,N_n) &= \mathbb{E}\,\big[\,A_q(f,N_q)(\eta_n^{N_n}-\eta_n)(\varphi_n)(\eta_q^{N_q}-\eta_q)(\varphi_q)\,\big]\ . 
\end{align*}

By Lemma \ref{lem:tech_lem_res1.5} and the fact that $\frac{\eta_n(|g| G_n)}{\eta_n(G_q)}\leq 1$, we have that
\begin{equation*}
\big|\, \tfrac{\eta_n(g G_n)}{\eta_n(G_n)}\,\Delta_{2,q,n}(f,\varphi_{q},\varphi_{n}, N_q,N_n)\,\big| \leq C \|\varphi_n\|_{\infty}\|\varphi_q\|_{\infty} \big(\tfrac{\kappa^{n-q}}{N_q} + \tfrac{\kappa^{n-q}}{N_q^{3/2}} + \tfrac{1}{\sqrt{N_q}N_n} + \tfrac{1}{N_n N_q}\big)\ .
\end{equation*}
Thus we concentrate on $\Delta_{1,q,n}(f,g,\varphi_{n},\varphi_{q}, N_q,N_n)$. We have via \eqref{eq:delmoral_decomp}:
\begin{align*}
\Delta_{1,q,n}&(f,g,\varphi_{n},\varphi_{q}, N_q, N_n)
 = \\
&=\mathbb{E}\,\big[\,\overline{A}_n(g,N_n)A_q(f,N_q)(\mathcal{V}_n(\varphi_n,N) +
\mathcal{R}_n(\varphi_n,N))(\mathcal{V}_q(\varphi_q,N) +
\mathcal{R}_q(\varphi_q,N))\,\big]\ .
\end{align*}
We will deal with each of the 4 terms on the R.H.S.~separately.

We start with $\mathbb{E}\,[\,\overline{A}_n(g,N_n)A_q(f,N_q)\mathcal{V}_n(\varphi_n,N)\mathcal{V}_q(\varphi_q,N)\,]$ and work as follows,
\begin{align*}
\Big|\,\mathbb{E}\,&[\,\overline{A}_n(g,N_n)A_q(f,N_q)\mathcal{V}_n(\varphi_n,N)\mathcal{V}_q(\varphi_q,N)\,]\,\Big| \\
&=\Big|\,\sum_{p=0}^{n} \sum_{s=0}^q \mathbb{E}\,\big[\,\tfrac{1}{\sqrt{N_p N_s}}\overline{A}_n(g,N_n)A_{q}(f,N_q)V_p^{N_p}(D_{p,n}(\varphi_n))V_s^{N_s}
(D_{s,q}(\varphi_q))\,\big]\,\Big|
 \\
&\le  
\sum_{p=0}^{n} \sum_{s=0}^{q-1}\tfrac{1}{\sqrt{N_p N_s}}\,\|\overline{A}_n(g,N)\|_3\,
\|V_p^{N_p}(D_{p,n}(\varphi_n))\|_{3}\,
\|V_s^{N_s}(D_{s,q}(\varphi_q))\|_3\ \\
&\le   C\|\varphi_n\|_{\infty}\|\varphi_q\|_{\infty} \tfrac{1}{\sqrt{N_n}} \sum_{p=0}^{n} \sum_{s=0}^q \tfrac{\kappa^{n-p+q-s}}{\sqrt{N_p N_s}}\\ 
&\leq \tfrac{C\|\varphi_n\|_{\infty}\|\varphi_q\|_{\infty}}{N_n\sqrt{N_q}}\ .
\end{align*}
where for the third line we have used $|A_{q}(g,N_q)|\leq 1$
and two applications of H\"older's inequality;
for the forth line we have used Lemma \ref{lem:tech_lem_imp}(\ref{lem:tech_lem2}) and Lemma \ref{lem:tech_lem_res2}.

Using very similar calculations one can obtain 
the upper bounds,
\begin{eqnarray*}
\Big|\,\mathbb{E}\,[\,\overline{A}_n(g,N_n)A_q(f,N_q)\mathcal{V}_n(\varphi_n,N)\mathcal{R}_q(\varphi_q,N)\,]\,\Big| & \le &  \tfrac{C\|\varphi_n\|_{\infty}\|\varphi_q\|_{\infty}}{N_n N_q}\ , \\
\Big|\,\mathbb{E}\,[\,\overline{A}_n(g,N_n)A_q(f,N_q)\mathcal{R}_n(\varphi_n,N)\mathcal{V}_q(\varphi_q,N)\,]\,\Big| & \le &
\tfrac{C\|\varphi_n\|_{\infty}\|\varphi_q\|_{\infty}}{N_n^{3/2}N_q^{1/2}}\ , \\
\Big|\,\mathbb{E}\,[\,\overline{A}_n(g,N)A_q(f,N)\mathcal{R}_n(\varphi_n)\mathcal{R}_q(\varphi_q)\,]\,\Big| & \le & \tfrac{C\|\varphi_n\|_{\infty}\|\varphi_q\|_{\infty}}{N_n^{3/2}N_q}.
\end{eqnarray*}
The proof is now complete.
%
%
\end{proof}

\end{document}